\pdfoutput=1	

\documentclass[11pt]{article}
\usepackage{graphicx}
\usepackage{fullpage}

\usepackage{xspace}
\usepackage{subcaption}
\usepackage{color}

\usepackage{enumitem}

\usepackage[linesnumbered, ruled, vlined]{algorithm2e}

\usepackage{quotes}

\usepackage{amsmath,amsthm,amssymb}

\newcommand{\stream}{\mathcal{R}}

\newtheorem{theorem}{Theorem}
\newtheorem{definition}{Definition}
\newtheorem{lemma}{Lemma}

\newtheorem{observation}{Observation}
\newcommand{\remove}[1]{}

\newcommand{\eps}{\varepsilon}

\newcommand{\graphwidth}{0.8\columnwidth}

\newcommand{\ssunif}{\texttt{Uniform}\xspace}		
\newcommand{\voila}{{\tt VOILA}\xspace}
\newcommand{\svoila}{\texttt{S-VOILA}\xspace}

		%
\newcommand{\reservoir}{\texttt{Reservoir}\xspace}	
\newcommand{\neyman}{\texttt{Neyman}\xspace}		
\newcommand{\neymanPlus}{\texttt{Neyman+}\xspace}	
\newcommand{\exact}{\texttt{Exact}\xspace}

\newcommand{\eg}{\hbox{\emph{e.g.,}}\xspace}
\newcommand{\ie}{\hbox{\emph{i.e.,}}\xspace}
\newcommand{\etal}{\hbox{\emph{et al.}}\xspace}

\newcommand{\srs}{{stratified random sample}}

\newcommand{\sssr}{{\tt SingleSSR}} 
                             
\newcommand{\ssr}{{\tt SSR}}
\newcommand{\fssr}{{\tt FastSSR}}

\newcommand{\vob}{{\tt VOILA}\xspace}


\DeclareMathOperator*{\argmax}{arg\,max}
\DeclareMathOperator*{\argmin}{arg\,min}

\SetKwProg{Fn}{Function}{}{}

\title{Variance-Optimal Offline and Streaming\\Stratified  Random  Sampling}
\author{Trong Duc Nguyen$^1$ \and Ming-Hung Shih$^1$ \and Divesh
  Srivastava$^2$ \and Srikanta Tirthapura$^1$ \and Bojian Xu$^3$}
\date{%
    $^1$Iowa State
    University, IA 50011, USA\\%
    $^2$AT\&T Labs--Research, NJ 07921, USA\\%
    $^3$ Eastern Washington
    University, WA 99004, USA.\\
\ \\
   \texttt{trong@iastate.edu\ \  mshih@iastate.edu\ \ 
     divesh@research.att.com\ \ snt@iastate.edu\ \
     bojianxu@ewu.edu}\\
\ \\
    \today
}

\begin{document}

\maketitle

\begin{abstract}
Stratified random sampling (SRS) is a fundamental sampling technique that provides accurate estimates for aggregate queries using a small size sample, and has been used widely for approximate query processing. A key question in SRS is how to partition a target sample size among different strata. While {\em Neyman allocation} provides a solution that minimizes the variance of an estimate using this sample, it works under the assumption that each stratum is abundant, \ie has a large number of data points to choose from. This assumption may not hold in general: one or more strata may be bounded, and may not contain a large number of data points, even though the total data size may be large. 

We first present \voila, an offline method for allocating sample sizes to strata in a variance-optimal manner, even for the case when one or more strata may be bounded. We next consider SRS on streaming data that are continuously arriving. We show a lower bound, that any streaming algorithm for SRS must have (in the worst case) a variance that is $\Omega(r)$ factor away from the optimal, where $r$ is the number of strata. We present \svoila, a practical streaming algorithm for SRS that is {\em locally variance-optimal} in its allocation of sample sizes to different strata. Our result from experiments on real and synthetic data show that \voila can have significantly (1.4 to 50.0 times) smaller variance than Neyman allocation. The streaming algorithm \svoila results in a variance that is typically close to \voila, which was given the entire input beforehand. 
\end{abstract}



\section{Introduction}
\label{sec:intro}
Random sampling is a widely-used method for data analysis, and features prominently in the toolbox of virtually every approximate query processing system. The power of random sampling lies in its generality. For many important classes of queries, an approximate answer, whose error is small in a statistical sense, can be efficiently obtained through executing the query over an appropriately derived random sample. Sampling operators are part of all major database products, \eg Oracle, Microsoft SQL Server, and IBM Db2.

The simplest method for random sampling is {\em uniform random sampling}, where each element from the entire data (the "population") is chosen with the same probability.  But uniform random sampling may lead to a high variance in estimates for aggregate queries. For instance, consider a population $D = \{1,1000,2,4,2,1050,1200,1,1300\}$, and suppose we wanted to estimate the sum of the population. A uniform random sample of size two will lead to an estimate with a variance of $1.3\times10^7$. 

An alternative sampling method is {\em stratified random sampling} (SRS), where the population is partitioned into subgroups called "strata". Within each stratum, uniform random sampling is used to select a per-stratum sample. The different per-stratum samples are then combined to derive the "stratified random sample". Suppose that the population is divided into two strata, one with elements $\{1,2,4,2,1\}$ and the other with elements $\{1000,1050,1200,1300\}$. A stratified random sample of size two can choose one element from each stratum, yielding an estimate
with the variance of $2\times10^5$, 46 times smaller than what was possible with a uniform random sample of the same size.

In SRS, there is flexibility to emphasize some strata over others, through controlling the allocation of sample sizes; for instance, a stratum with a high standard deviation of values within can be given a larger allocation than another stratum with a lower standard deviation. In the above example, if we desire a stratified sample of size three, it is beneficial to allocate a smaller sample size of one to the first stratum and a larger sample size of two for the second stratum, since the standard deviation of the second stratum is higher. Doing so, the variance of estimate of the population sum reduces to approximately $1 \times 10^5$. SRS has been used widely in database systems for approximate query processing~\cite{AMPMMS-eurosys2013,CDN-TODS2007, JJ-ICDE2008, AGPR-SIGMOD1999, AGP-SIGMOD2000, BCD-SIGMOD2003}. 


Suppose that there are $r$ strata, numbered from $1$ to $r$, and that the mean, standard deviation, and number of items in the $j$th stratum are $\mu_j$, $\sigma_j$, and $n_j$ respectively.  Suppose that the target sample size is $M$ (total across all strata). We measure the quality of a stratified random sample through the variance in the estimate of the population mean\footnote{The standard deviation of data within a stratum is distinct from the variance of an estimate of an aggregate that is derived from a stratified random sample.}, computed using this sample. In "uniform allocation", each stratum $j$ gets an identical allocation of sample size of $s_j = M/r$. In "proportional allocation", a stratum is allocated a sample size proportional to the number of elements in it. A commonly used method that is believed to yield the smallest variance for an estimate of a population mean is "Neyman allocation"~\cite{Neyman1934,Coch77:book}, where stratum $j$ gets an allocation proportional to $\sigma_j n_j$. Many sampling methods for approximate query processing, such as the ones used in~\cite{CDN-TODS2007,AMPMMS-eurosys2013}, are based on Neyman allocation.


A problem with Neyman allocation is that it assumes that each stratum has abundant data, much larger than the size of samples. However, in practice, strata can be bounded, and may not always contain a large number of elements, and in such situations, Neyman allocation can be suboptimal. To see this, suppose there were 10 strata in the population, and suppose stratum 1 had 100 items and a standard deviation of 100, strata 2 to 10 each had 1000 items and a standard deviation of 0.1. 
With a sample size of $M=1000$ items ($\approx 11 \%$ of data size), Neyman allocation assigns 917 samples to stratum 1, and 9 samples each to the other strata. However stratum 1 has only 100 items, and it is wasteful to allocate more samples to this stratum. We call such strata, which have a small number of elements relative to the assigned sample size, as "bounded" strata. For instance, in our experiments with a sample size of 1 million from the one-year-long OpenAQ dataset~\cite{OpenAQ} on air quality measurements, we found that after the first month, 11 out of 60 strata are bounded.  For data with bounded strata, Neyman allocation is clearly no longer the variance-optimal method for sample size allocation.


Another problem with the current state-of-the-art is that methods for SRS are predominantly offline methods, and assume that all data is available before sampling starts. As a result, systems that rely on SRS (\eg~\cite{AMPMMS-eurosys2013,CDN-TODS2007}) cannot easily adapt to new data arrivals and will need to recompute stratified random samples from scratch, as more data arrives. However, with the advent of streaming data warehouses such as Tidalrace~\cite{JS-CIDR2015}, it is imperative to have methods for SRS that work on dynamic data streams, and maintain stratified random samples in an incremental manner. In this work, we consider the general problem of variance-optimal SRS in both the offline and streaming settings, when some of the strata may be bounded.

\subsection{Our Contributions}
\label{sec:cont}
\noindent {\bf -- Variance Optimal Stratified Random Sampling:} We present the first offline algorithm for variance-optimal SRS for data that may have bounded strata. Our algorithm \voila ({\tt V}ariance {\tt O}pt{\tt I}ma{\tt L} {\tt A}llocation) computes an allocation that has provably optimal variance among all possible allocations of sample sizes to different strata. While prior work assumes that there are no strata with small volumes of data, which is often violated in real data sets, our analysis makes no such assumptions. \voila is a generalization of Neyman allocation and reduces to Neyman allocation in the case when every stratum is abundant.

\smallskip

\noindent {\bf -- Streaming Lower Bound:} We present a lower bound showing that any streaming algorithm for SRS that uses a memory of $M$ records must have, in the worst case, a variance that is a factor of $\Omega(r)$ away from the variance of the optimal offline algorithm for SRS that uses a memory of $M$ records. This lower bound is tight, since there exist streaming algorithms for SRS whose variance matches this bound in the worst case.

\smallskip

\noindent {\bf -- Practical Streaming Algorithm for SRS:}
We present \svoila, a streaming algorithm for SRS that is locally optimal with respect to variance -- upon receiving new elements, it (re-)allocates sample sizes among strata so as to minimize the variance among all possible re-allocations. \svoila can be viewed as the online, or dynamic counterpart of the optimization that led to \voila, which is based on optimizing the variance using a static view of data. \svoila can also deal with the case when a minibatch of multiple data items is seen at a time, rather than only a single item at a time -- re-allocations made by \svoila are locally optimal with respect to the entire minibatch, and are of higher quality for larger size minibatches than when a single element is seen at a time. In our experimental study, we found that the variance of \svoila is typically close to that of the offline algorithm \voila, and the variance of \svoila improves as the size of the minibatch increases. Since it can deal with minibatches of varying sizes, it is well-suited to real-world streams that may have bursty arrivals.

\smallskip

\noindent {\bf -- Variance Optimal Sample Size Reduction:}
The algorithms for offline SRS (\voila) and streaming SRS (\svoila) are both based on a technique for reducing the size of an existing stratified random sample down to a desired target size such that the increase in variance of the estimator based on the final sample is optimized. This technique for sample size reduction may be of independent interest in other tasks such as sub-sampling from a given stratified random sample. 

\smallskip

\noindent {\bf -- Experimental Evaluation:}
We present a detailed experimental evaluation using real and synthetic data, considering both quality of sample and accuracy of query answers. Our experiments show that (a)~\voila  can have significantly smaller variance than Neyman allocation, and (b) \svoila closely tracks the allocation as well as the variance of the optimal offline algorithm \voila. As the size of the minibatch increases, the variance of the samples produced by \svoila decreases. A minibatch of size 100 provides most of the benefits of \voila, in our experiments on real-world data.
\subsection{Related Work}
\label{sec:related}
Sampling has been widely used in approximate query processing on both static and streaming data~\cite{Coch77:book,Lohr-book2009,YTill-book1997,SKT2012,HAAS2016}. The reservoir sampling~\cite{MB1983,Vitter-sampling-focs83} method for maintaining a uniform random sample on a stream has been known for decades, and many variants have been considered, such as weight-based sampling~\cite{ES-IPL2006,BOV-sampling2015}, stream sampling under insertion and deletion of elements~\cite{GLH-vldbj2008}, distinct sampling~\cite{GT01}, sampling from a sliding window~\cite{BDM02,GL-SIGMOD2008,BOZ-PODS2009}, and time-decayed sampling~\cite{CTX-SICOMP09,CSSX09}.

SRS in the online setting~\cite{SKT2012} can be viewed as a type of weight based reservoir sampling where the weight of each stream element is changing dynamically, based on the statistics of the stratum the element belongs to. Since the weight of each stream element changes dynamically, even after it has been observed, prior work on weighted reservoir sampling~\cite{ES-IPL2006} does not apply here, since it assumes that the weight of an element is known at the time of observation and does not change henceforth. Meng \cite{Meng-ML2013} considered streaming SRS using population-based allocation. Al-Kateb~\etal~\cite{AL-srs-SSDBM2010,AL-IS2014} considered streaming SRS using power allocation, based on their prior work on adaptive reservoir sampling~\cite{ALW-SSDBM2007}. Lang \textit{et al.}~\cite{LLS15} consider machine learning methods for determining the per-item probability of inclusion in a sample. This work is meant for static data, and can be viewed as a version of weighted random sampling where the weights are learnt using a query workload. Prior work on streaming SRS neither considers provable guarantees on the quality of the resulting samples, nor lower bounds for streaming SRS, like we do here.

A majority of prior work on using SRS in approximate query processing~\cite{AGPR-SIGMOD1999, AGP-SIGMOD2000, BCD-SIGMOD2003, CDN-TODS2007, JJ-ICDE2008,AMPMMS-eurosys2013} has assumed static data. With the emergence of data stream processing systems~\cite{BBDMW02} and data stream warehousing systems~\cite{JS-CIDR2015}, it is important to devise methods for streaming SRS with quality guarantees.

\remove{
Outline:

Sampling as a tool from statistics. Its overall technical background from
statistics and survey paper/book. Its different strategies, for exampling, uniform
sampling, weighted sampling, cluster sampling, stratified sampling. 
Their classic representative works. 

The usage of different sampling techniques in the database community for
various purposes/applications, for example, approximate query processing. 
More discussion of related work on stratified sampling in the DB
community. BlinkDB work. 

The overall study on sampling in the streaming context. Different
sampling strategies that are known well in the streaming context:
uniform sampling (reservoir sampling), weighted sampling, cluster
sampling. And their applications in DB and dataware house and large
scale cluster based (map-reduced) computation.  

The only two known work on stratified sampling of a stream: one uses
population based allocation, the other uses Nayman allocation. Discuss
the later with more details and point out what is missed in that work
and the new work in ours. 

Conceptual usage of stratified sampling in 
subset sampling for flow based sampling. Example papers include ...

Notes not on related work: assumpting in the model about $r$ is a
function of the memory budget $M$ ? 
}

\vspace*{-0.2cm}
\section{Overview}
\label{sec:overview}
\vspace*{-0.1cm}
\subsection{Preliminaries}
We consider the construction and maintenance of a stratified random sample of data that is either stored offline, or arriving as a stream. Stratified sampling can be viewed as being composed of three parts -- stratification, sample allocation, and sampling. 

Stratification is a partitioning of the universe into a number of disjoint strata, such that the union of all strata equals the universe. Equivalently, it is the assignment of each data element to a unique stratum. Stratification is often a pre-defined function of one or more attributes of the data element. For example, the work of Chaudhuri et al.~\cite{CDN-TODS2007} stratifies tuples within a database table based on the set of selection predicates in the query workload that the tuple satisfies. In the OpenAQ dataset~\cite{OpenAQ}, air quality data measurements can be stratified on the basis of geographic location and measurement type, so that tuples relevant to a query can typically be composed of the union of strata. Our work assumes that the universe has already been partitioned into strata, and that each tuple comes with a stratum identifier. This assumption fits the model assumed in~\cite{CDN-TODS2007,AMPMMS-eurosys2013}. 

Our work deals with sample allocation, the task of partitioning the available memory budget of $M$ samples among the strata. In the case of offline sampling, allocation needs to be done only once, after knowing the data in its entirety. In the case of streaming sampling, the allocation may need to be continuously re-adjusted as more data arrives, and the characteristics of different strata change. 

The final sampling step chooses within each stratum, the assigned number of samples uniformly at random. In the case of offline stratified sampling, the sampling step can be performed in a second pass through the data after sample size allocation, using reservoir sampling on the subset of elements belonging to each stratum. In the case of streaming sampling, the sampling step is not as easy, since it needs to occur simultaneously with sample (re-)allocation, which may change the allocations to different strata over time.

\smallskip
{\bf Variance-Optimal Allocation.}
Given a data set, $R=\{v_1, v_2, \ldots, v_n\}$ of size $n$, whose elements are stratified into $r$ strata, numbered $1,2,\ldots,r$.  For each $i=1\ldots r$, let $S_i$ be a a uniform random sample of size $s_i$ drawn without replacement from stratum $i$. Let $\mathbb{S} = \{S_1, S_2, \ldots, S_n\}$ denote the stratified random sample.  

The sample mean of each per-stratum sample $S_i$ of size $s_i$ is: $\bar{y}_i = \frac{\sum_{v \in S_i} v }{s_i}$.
The population mean of $R$, $\mu_R$ can be estimated as:
$\bar{y} = \frac{\sum_{i=1}^r n_i \bar{y}_i}{n}$, using the sample means of all strata. It can be shown that the expectation of $\bar{y}$ equals $\mu_R$.

Given a memory budget of $M \leq n$ elements to store all the samples, so that $\sum_i s_i = M$, we address the question: What is the value of each $s_i$, the size of sample $S_i$, so as to minimize the variance of $\bar{y}$. The variance of $\bar{y}$ can be computed as follows (e.g. see Theorem~5.3 in~\cite{Coch77:book}):
\begin{eqnarray}
V = V(\bar{y}) =  \frac{1}{n^2} \sum_{i=1}^r n_i(n_i -
s_i)\frac{\sigma_i^2}{s_i} 
= \frac{1}{n^2} \sum_{i=1}^r \frac{n_i^2 \sigma_i^2}{s_i} - 
\frac{1}{n^2} \sum_{i=1}^r n_i\sigma_i^2. \label{eqn:var}
\end{eqnarray}
We call the answer to this question as a \emph{variance-optimal allocation} of sample sizes to different strata.

\smallskip
{\bf Neyman Allocation for Strata that are abundant.} All previous studies on variance-optimal  allocation assume that every stratum has a large volume of data, to fill its sample allocation. Under this assumption, Neyman allocation~\cite{Neyman1934,Coch77:book} minimizes the variance $V$, and allocates a sample size for stratum $i$  as  
$M\cdot (n_i \sigma_i) /\left(\sum_{j=1}^r n_j\sigma_j\right)$.

Given a collection of data elements $R$, we say a stratum $i$ is \emph{abundant}, if $n_i \geq M\cdot (n_i \sigma_i) /\left(\sum_{j=1}^r n_j\sigma_j\right)$. Otherwise, the stratum $i$ is \emph{bounded}. Clearly, Neyman allocation is optimal only if each stratum is abundant. It no longer be optimal if one or more strata are bounded. We consider the general case of variance-optimal allocation where there may be bounded strata.


\subsection{Solution Overview}
We note that both offline and streaming SRS can be viewed as a problem of "sample size reduction" in a variance-optimal manner. With offline SRS, we can initially view the entire data as a (trivial) sample of zero variance, where the sample size is very large -- this sample needs to be reduced to fit within the memory budget of $M$ records. 
If this reduction is done in a manner that minimizes the increase of variance, 
the resulting sample is a variance-optimal sample of size $M$. 

In the case of streaming SRS, the streaming algorithm maintains a current stratified random sample of size $M$. It also maintains the characteristics of each stratum, including the number of elements $n_i$ and standard deviation $\sigma_i$, in a streaming manner using $O(1)$ space per stratum. When a set of new stream elements arrive, we can let the per-stratum reservoir sampling algorithms continue sampling as before. If the sample size increases due to this step, then we are again faced with a problem of sample size reduction -- how can this be reduced to a sample of size $M$ in a variance-optimal manner? 

Based on the above observation, we first present a variance-optimal sample size reduction method in Section~\ref{sec:reduction}. We start with an algorithm for reducing the size of the sample by one element, followed by a general algorithm for reducing the size by $\beta \ge 1$ elements, and then present an improved algorithm with a faster runtime. The variance-optimal offline algorithm \voila can be viewed as an application of sample size reduction -- details are presented in Section~\ref{sec:offline}. We present a tight lower bound for any streaming algorithm, followed by \svoila, an algorithm for streaming SRS in Section~\ref{sec:streaming}. Note that the streaming algorithm \svoila does not necessarily lead to a variance-optimal sample. Though the individual sample-size reduction steps performed during observation of the stream are locally optimal, the overall result may not be optimal. Further details are in Section~\ref{sec:streaming}. We present a detailed experimental study of our algorithms in Section~\ref{sec:eval}.

\section{Variance-Optimal Sample Size Reduction}
\label{sec:reduction}
\newcommand{\strata}{{\cal A}}
\newcommand{\vor}{\texttt{VOR\xspace}}
Suppose it is necessary to reduce an SRS of total size $M$ to an SRS of total size $M' < M$. This will need to reduce the size of the samples of one or more strata in the SRS. Since the sample sizes are reduced, the variance of the resulting estimate will increase. We consider the task of {\em variance-optimal sample size reduction (\vor)}, \ie how to partition the reduction in sample size among the different strata in such a way that the increase in the variance is minimized.



Consider Equation~\ref{eqn:var} for the variance of an estimate derived from the stratified random sample. Note that, for a given data set, a change in the sample sizes of different strata $s_i$ does not affect the parameters $n$, $n_i$, and $\sigma_i$. \vor~ can be formulated as the following non-linear program.
\begin{eqnarray}
\text{Minimize } \sum_{i=1}^r \frac{n_i^2\sigma_i^2}{s'_i}\label{eqn:obj1}
\end{eqnarray}
\noindent
subject to constraints: 
\begin{eqnarray}
0\leq s'_i \leq s_i \textrm{ for each }i=1,2,\ldots, r\label{eqn:cons1.1}\\
\sum_{i=1}^r s'_i = M'\label{eqn:cons1.2}
\end{eqnarray}

We observe that, without Constraint~\ref{eqn:cons1.1}, and if all strata are unbounded, the answer to the above optimization program is exactly the Neyman allocation under memory budget $M'$.  However, we have to deal with the additional Constraint~\ref{eqn:cons1.1} and the possibility of a stratum being bounded, in an efficient manner. In the rest of this section, we present efficient approaches for computing the \vor. 
\subsection{Special Case: Reduction by One Element}
\label{sec:reduction-single}
We first present an efficient algorithm for the case where the size of a \srs\ is reduced by one element.  An example application of this case is in designing a streaming algorithm for SRS, when stream items arrive one at a time.

We introduce a terminology that we will use frequently in the rest of the paper. 
Given a memory budget $M$, the Neyman allocation size for stratum $i$ is $M_i = M \cdot n_i\sigma_i/\sum_{j=1}^{r}n_j\sigma_j$. 
The task is to eliminate a random element from a stratum $i$ such that after reducing the sample size $s_i$ by one, the increase in variance $V$ (Equation~\ref{eqn:var}) is the smallest. 
Our solution is to choose stratum $i$ such that the partial derivative of $V$ with respect to $s_i$ is the largest over all possible choices of $i$. 
\[
\frac{\partial V}{\partial s_i} = -\frac{n_i^2 \sigma_i^2}{n^2} \frac{1}{s_i^2}.
\]
We choose stratum $\ell$ where:
\begin{eqnarray}
\ell = \argmax_i\left\{ \frac{\partial V}{\partial s_i} \,\middle\vert\,   1\leq i\leq r\right\}
= \argmin_i\left\{ \frac{n_i\sigma_i}{s_i}\,\middle\vert\, 1\leq i\leq r\right\}
= \argmax_i\left\{ \frac{s_i}{M'_i}\,\middle\vert\,
  1\leq i\leq r\right\}\label{eqn:per-ev},
\end{eqnarray}
where $M'_i$ is the Neyman allocation size for stratum $i$ under the new memory budget $M'$. Equation~\ref{eqn:per-ev} is due to the fact that each $M'_i$ is proportional to $n_i\sigma_i$. This gives the following lemma.

\begin{lemma}
\label{lem:one-ev}
When required to reduce the size of an \srs\ by one, the increase in variance of the estimated population mean is minimized if we reduce the size of $S_\ell$ by one, where
$\ell = \argmin_i\left\{ \frac{n_i\sigma_i}{s_i}\,\middle\vert\, 1 \leq i \leq r \right\}$.
\end{lemma}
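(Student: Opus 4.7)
The plan is to treat the lemma as a calculus statement about the smooth function $V(s_1,\ldots,s_r)$ given by Equation~\ref{eqn:var}. First I would observe that the second sum $-\tfrac{1}{n^2}\sum_i n_i\sigma_i^2$ does not depend on the allocation, so every change in $V$ from perturbing the $s_i$'s comes entirely from $\tfrac{1}{n^2}\sum_i n_i^2\sigma_i^2/s_i$, and decrementing a single $s_\ell$ affects only the $\ell$-th summand. This reduces the problem to comparing a collection of independent, one-variable rational terms, one per stratum.

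Next I would differentiate term-by-term to obtain $\partial V/\partial s_i = -n_i^2\sigma_i^2/(n^2 s_i^2)$. A small decrease of $s_\ell$ increases $V$ by approximately $-\partial V/\partial s_\ell > 0$, so the increase is smallest at the index where the magnitude $|\partial V/\partial s_\ell|$ is smallest. Cancelling the common factor $n^2$ and taking square roots turns this condition into $\ell = \argmin_i n_i\sigma_i/s_i$, which is exactly the statement of the lemma. The equivalent formulation $\ell = \argmax_i s_i/M'_i$ from Equation~\ref{eqn:per-ev} then follows immediately, since the Neyman targets $M'_i$ are proportional to $n_i\sigma_i$.

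The one step I expect to be delicate is bridging the derivative argument with the actual \emph{unit} decrement: the exact increase from reducing $s_i$ by one is $n_i^2\sigma_i^2/(n^2 s_i(s_i-1))$, which differs from the linearization $n_i^2\sigma_i^2/(n^2 s_i^2)$ by a factor $1+1/(s_i-1)$. I would therefore phrase the lemma in the continuous-relaxation sense that the paper already adopts when it picks $\ell$ via the partial derivative, and explicitly record that the factor above is close to one in the regimes targeted by \voila and \svoila, so the discrete and marginal argmins agree there. Everything else is a routine manipulation that follows directly from the form of Equation~\ref{eqn:var}.
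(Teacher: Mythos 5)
Your proposal follows essentially the same route as the paper: the paper likewise picks $\ell$ by maximizing $\partial V/\partial s_i = -n_i^2\sigma_i^2/(n^2 s_i^2)$ and reads off $\ell = \argmin_i n_i\sigma_i/s_i$, with the same equivalence to $\argmax_i s_i/M'_i$. The delicate point you flag is real and is not resolved in the paper either: the exact unit-decrement increase is $n_i^2\sigma_i^2/\bigl(n^2 s_i(s_i-1)\bigr)$, whose minimizer is $\argmin_i n_i\sigma_i/\sqrt{s_i(s_i-1)}$ and can differ from $\argmin_i n_i\sigma_i/s_i$ when sample sizes are small, so your explicit caveat about working in the continuous relaxation is if anything more careful than the paper's own argument.
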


In the case where we have multiple choices for $\ell$ using Lemma~\ref{lem:one-ev}, we choose the one where the current sample size $s_\ell$ is the largest. 
Algorithm \sssr~ for reducing the sample size by one is shown in Algorithm~\ref{algo:reduction-single}. It is straightforward to observe that the run time of the algorithm is $O(r)$.



%




\begin{algorithm}[t]
\DontPrintSemicolon
\caption{$\sssr()$: Variance-Optimal Sample  Size Reduction by One}
\label{algo:reduction-single}

\Return{$\argmin_i\left\{ \frac{n_i\sigma_i}{s_i}\,\middle\vert\, 1 \leq i \leq
  r \right\}$}\tcc*{The id of the stratum whose sample size shall be reduced by one.} 
\end{algorithm}

\subsection{Reduction by {\large $\beta\geq 1$} Elements}
\label{sec:reduction-general}
We now consider the general case, where the sample size needs to be reduced by some number $\beta$,  $1\leq \beta \leq M$. A possible solution idea is to repeatedly apply the one-element reduction algorithm (Algorithm~\ref{algo:reduction-single} from Section~\ref{sec:reduction-single}) $\beta$ times. Each iteration, a single element is chosen from a stratum such that the overall variance increases by the smallest amount. However, this greedy approach may not yield a sample with the smallest resulting variance. On the other hand, an exhaustive search of all possible evictions is not feasible either, since the number of possible ways to partition a reduction of size $\beta$ among $r$ strata is ${\beta+r-1} \choose {r}$, which is exponential in $r$ and a high degree polynomial in $\beta$, which can be very large. We now present efficient approaches to $\vor$. We first present a recursive algorithm, followed by a faster iterative algorithm. 
Before presenting the algorithm, we present the following useful characterization of a variance-optimal reduction.

\begin{definition}
\label{def:oversized}
We say that stratum $i$ is \emph{oversized} under memory budget $M$,  if its allocated sample size $s_i >  M_i$. Otherwise, we say that stratum $i$ is \emph{not oversized}.
\end{definition}
\begin{lemma}
\label{lem:evict-o}
Suppose that $E$ is the set of $\beta$ elements that are to be evicted from a stratified random sample such that the variance $V$ after eviction is the smallest possible. Then, each element in $E$ must be from a stratum whose current sample size is oversized under the new memory budget $M'=M - \beta$.
\end{lemma}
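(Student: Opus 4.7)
The plan is to prove Lemma~\ref{lem:evict-o} by contradiction via a local swap argument, generalizing the marginal reasoning behind Lemma~\ref{lem:one-ev}. Suppose the variance-optimal allocation $s'_1,\ldots,s'_r$ resulting from eviction set $E$ contains an element from some stratum $k$ whose current sample size is not oversized under $M'$, i.e., $s_k \le M'_k$. Then $s'_k < s_k \le M'_k$, so in particular $s'_k < M'_k$. Because $M'_i = M' n_i\sigma_i / T$ (with $T = \sum_h n_h \sigma_h$) satisfies $\sum_i M'_i = M' = \sum_i s'_i$, a pigeonhole step gives some stratum $j \ne k$ with $s'_j > M'_j \ge 0$. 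I will exhibit a feasible perturbation that shifts mass from $j$ back to $k$ and strictly decreases the objective~(\ref{eqn:obj1}), contradicting optimality.

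Concretely, consider the perturbation $s'_k \mapsto s'_k + \delta$, $s'_j \mapsto s'_j - \delta$, with all other coordinates fixed. The equality constraint~(\ref{eqn:cons1.2}) is preserved, and the box constraints~(\ref{eqn:cons1.1}) remain satisfied for every $\delta \in (0,\, \min\{s_k - s'_k,\, s'_j\})$; this window is non-empty because $s'_k < s_k$ (since we evicted at $k$) and $s'_j > 0$. Differentiating~(\ref{eqn:var}) in $\delta$ at $\delta=0$ gives
\begin{equation*}
\frac{dV}{d\delta}\bigg|_{\delta=0} \;=\; \frac{1}{n^2}\!\left(\frac{n_j^2 \sigma_j^2}{(s'_j)^2} - \frac{n_k^2 \sigma_k^2}{(s'_k)^2}\right).
\end{equation*}
Substituting $n_i \sigma_i = M'_i T/M'$ (immediate from the definition of $M'_i$), this derivative is proportional to $(M'_j/s'_j)^2 - (M'_k/s'_k)^2$. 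Now $s'_k < M'_k$ yields $M'_k/s'_k > 1$, and $s'_j > M'_j$ yields $M'_j/s'_j < 1$, so the derivative is strictly negative. Hence, for all sufficiently small $\delta>0$ the perturbed allocation is feasible and has strictly smaller variance than $s'$, contradicting the optimality of $E$.

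The step I expect to require the most care is ensuring that the complementary stratum $j$ can genuinely be chosen distinct from $k$ with $s'_j$ bounded away from~$0$; both of these fall out of the identity $\sum_i M'_i = M' = \sum_i s'_i$ combined with the strict inequality $s'_k < M'_k$. If one prefers a non-perturbative route, the same conclusion follows from the KKT conditions for the convex program~(\ref{eqn:obj1})--(\ref{eqn:cons1.2}): at an optimum, every stratum $i$ with $s'_i < s_i$ shares a common value $c = n_i\sigma_i/s'_i$, while every boundary stratum $i$ with $s'_i = s_i$ satisfies $n_i\sigma_i/s_i \ge c$; summing the latter over boundary strata and combining with the interior identity $c\sum_{\text{int}} s'_i = \sum_{\text{int}} n_i\sigma_i$ yields $cM' \le T$, which rearranges to $s'_k \ge M'_k$ for each evicted $k$ and hence $s_k > s'_k \ge M'_k$, directly contradicting $s_k \le M'_k$.
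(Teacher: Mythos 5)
Your proof is correct, and it rests on the same underlying mechanism as the paper's — an exchange between a non-oversized stratum from which an element was evicted and a stratum that must, by pigeonhole, sit above its Neyman share — but the execution is genuinely different. The paper argues discretely: it defers the offending eviction to be the last of the $\beta$, notes that at that moment some sample is oversized (since $M'+1$ elements remain against a budget of $M'$), and compares the two unit-eviction increments $\frac{n_\alpha^2\sigma_\alpha^2}{s_\alpha(s_\alpha-1)}$ against the threshold $\bigl(\sum_i n_i\sigma_i/M'\bigr)^2$. You instead work with the continuous program (\ref{eqn:obj1})--(\ref{eqn:cons1.2}) and either perturb infinitesimally or invoke KKT. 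Your route buys two things. First, the sign analysis $(M'_j/s'_j)^2-(M'_k/s'_k)^2<0$ is airtight, whereas the paper's inequality $s_{\alpha'}(s_{\alpha'}-1)>{M'}_{\alpha'}^2$ for an oversized stratum does not actually follow from $s_{\alpha'}>M'_{\alpha'}$ alone (take $M'_{\alpha'}=1.9$, $s_{\alpha'}=2$), so your version repairs a small soft spot in the published argument. Second, the KKT variant yields the stronger structural fact that every stratum touched by an eviction ends at or above its Neyman share, which gives Observation~\ref{ob:exit} essentially for free. The one caveat is that your perturbation lives in the continuous relaxation: if sample sizes are insisted upon as integers, an infinitesimal $\delta$ does not immediately produce a feasible strictly better integral allocation, and the unit swap $\delta=1$ runs into exactly the boundary case noted above. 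Since the paper itself treats allocations as real-valued throughout (Neyman allocations are non-integral), this is a modeling choice rather than a gap in your argument.
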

\vspace*{-0.3cm}
\begin{proof}
  We use proof by contradiction. Suppose one of the evicted elements,
  is deleted from a sample $S_\alpha$ such that the sample size $s_\alpha$ is not oversized under the new memory budget. 
  Because the order of the eviction of the $\beta$ elements
  does not impact the final variance, suppose that element $e$ is
  evicted after the other $\beta-1$ evictions have happened. Let
  $s_\alpha$ denote the size of sample $S_\alpha$ at the moment $t$
  right after the first $\beta-1$ evictions and before evicting
  $e$. The increase in variance caused by evicting an element from $S_\alpha$ is
\begin{eqnarray*}
\Delta = \frac{1}{n^2}\left(\frac{n_\alpha^2\sigma_\alpha^2}{s_\alpha(s_\alpha-1)}\right)
= \left(\frac{\sum_{i=1}^rn_i\sigma_i}{nM'}\right)^2\frac{{M'}_\alpha^2}{s_\alpha(s_\alpha-1)}
> \left(\frac{\sum_{i=1}^rn_i\sigma_i}{nM'}\right)^2
\end{eqnarray*}
where ${M'}_\alpha =M' \frac{n_\alpha\sigma_\alpha}{\sum_{i=1}^rn_i\sigma_i}$
is the Neyman allocation for stratum $\alpha$ under memory budget $M'$. The last
inequality is due to the fact that $S_\alpha$ is not oversized under
budget $M'$ at time $t$, i.e., $s_\alpha \leq M'_\alpha$.

Note that an oversized sample exist at time $t$, since there are
a total of $M'+1$ elements in the \srs\ at time $t$, and the memory
target is $M'$. Instead of evicting $e$, if we choose to evict another
element $e'$ from an oversized sample $S_{\alpha'}$, the resulting
increase in variance will be:
\begin{eqnarray*}
\Delta' = \frac{1}{n^2}\left(\frac{n_{\alpha'}^2\sigma_{\alpha'}^2}{s_{\alpha'}(s_{\alpha'}-1)}\right)
= \left(\frac{\sum_{i=1}^rn_i\sigma_i}{nM'}\right)^2\frac{{M'}_{\alpha'}^2}{s_{\alpha'}(s_{\alpha'}-1)}
< \left(\frac{\sum_{i=1}^rn_i\sigma_i}{nM'}\right)^2
\end{eqnarray*}
where $M'_{\alpha'} =M'\frac{n_{\alpha'}\sigma_{\alpha'}}{\sum_{i=1}^rn_i\sigma_i}$
is the Neyman allocation for stratum ${\alpha'}$ under
memory budget $M'$. The last inequality is due to the fact that
$S_{\alpha'}$ is oversized under budget $M'$ at time $t$, i.e.,
$s_{\alpha'} > M'_{\alpha'}$.  Because $\Delta' < \Delta$, at time
$t$, evicting $e'$ from $S_{\alpha'}$ leads to a lower variance than
evicting $e$ from $S_{\alpha}$. This is a contradiction to the
assumption that evicting $e$ leads to the smallest variance, and completes the proof.
\end{proof}

Lemma~\ref{lem:evict-o} implies that it is only necessary to 
reduce the size of the samples that are oversized under the target
memory budget $M'$. Samples that are not
oversized can be given their current allocation, even under the new
memory target $M'$. Our algorithm based on this observation first
allocates sizes to the samples that are not oversized. The remaining
memory now needs to be allocated among the oversized samples. Since
this can again be viewed as a sample size reduction problem, while
focusing on a smaller set of (oversized) samples, this is accomplished
using a recursive call under a reduced memory budget; See
Lemma~\ref{lem:recur} for a formal statement of this idea. The base
case for this recursion is when all samples under consideration are
oversized. In this case, we simply use the Neyman allocation to each
stratum, under the reduced memory budget $M'$
(Observation~\ref{ob:exit}). Our algorithm $\ssr$ is shown in
Algorithm~\ref{algo:reduction-batch}.

\begin{algorithm}[t]
\DontPrintSemicolon
\caption{\ssr($\strata, M, \mathcal{L}$): Variance-Optimal Sample Size Reduction}
\label{algo:reduction-batch}
\KwIn{$\strata$ -- set of strata under consideration. \\\hspace*{10mm} $M$ -- target sample size for all strata in $\strata$.}

\KwOut{For $i \in \strata$, $\mathcal{L}[i]$ is the final size of sample for stratum $i$.}

 $\mathcal{O} \gets \emptyset$~~ \tcp{oversized samples} 

 \For{$j \in \strata$\label{line:batch-neyman-2.1}}
 {
   $M_{j} \leftarrow M\cdot n_{j}\sigma_{j}/\sum_{t \in \strata} n_{t}\sigma_{t}$ \label{line:batch-neyman-2.2}
   \tcp{Neyman allocation if memory $M$ divided among $\strata$}
   \lIf{($s_{j} > M_{j}$)} {$\mathcal{O} \gets \mathcal{O} \cup \{j \}$}
   \lElse{$\mathcal{L}[j] \gets s_j$    \tcp{Keep current allocation}}
 }
  \If{$\mathcal{O} = \strata$ \label{line:evict-rec-exit-1}}{
   \tcp{All samples oversized. Recursion stops.}
   \lFor{$j\in \strata$}
    { 
       $\mathcal{L}[j] \gets M_j$
    } 
 }

\Else{

  \tcp{Recurse on $\mathcal{O}$, under remaining mem budget.}
   \ssr($\mathcal{O}, M-\sum_{j\in \strata - \mathcal{O}}s_j, \mathcal{L}$)\label{line:evict-rec}\;
 }
\end{algorithm}

Let $\mathbb{S} = \{S_1, S_2, \ldots, S_r\}$ be the current stratified random sample.  Let $\strata$ denote the set of all strata under
consideration, initialized to $\{1,2,\ldots,r\}$. Let $\mathcal{O}$ denote the set of oversized samples, under target memory budget for
$\mathbb{S}$, and $\mathcal{U} = \mathbb{S} - \mathcal{O}$ denote the collection of samples that are not oversized.
When the context is clear, we use $\mathcal{O}, ~ \mathcal{U}$, and $\strata$ to refer to the set of stratum identifiers as well as the
set of samples corresponding to these identifiers.
 
\begin{lemma}
\label{lem:recur}
A variance-optimal eviction of $\beta$ elements from $\mathbb{S}$
under memory budget $M'$ requires a variance-optimal eviction of $\beta$ elements from
$\mathcal{O}$ under memory budget $M'-\sum_{j \in \mathcal{U}} s_j$.
\end{lemma}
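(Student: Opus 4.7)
The plan is to reduce the claim directly to Lemma~\ref{lem:evict-o} together with the additive structure of the variance expression in Equation~\ref{eqn:var}. First I would invoke Lemma~\ref{lem:evict-o} on the target memory budget $M'$: every one of the $\beta$ evicted elements must come from a stratum that is oversized under $M'$, i.e., from some stratum in $\mathcal{O}$. Consequently, for every $j \in \mathcal{U}$ the post-eviction sample size $s'_j$ is exactly $s_j$, so the non-oversized samples contribute a fixed amount $\sum_{j \in \mathcal{U}} s_j$ to the total memory, leaving $M' - \sum_{j \in \mathcal{U}} s_j$ to be distributed among the strata in $\mathcal{O}$.

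Next I would exploit the fact that the second sum in Equation~\ref{eqn:var}, namely $\frac{1}{n^2}\sum_{i=1}^r n_i \sigma_i^2$, does not depend on the allocation, and that the first sum $\frac{1}{n^2}\sum_{i=1}^r \frac{n_i^2 \sigma_i^2}{s'_i}$ splits as
\begin{equation*}
\frac{1}{n^2}\sum_{j \in \mathcal{U}} \frac{n_j^2 \sigma_j^2}{s_j} \;+\; \frac{1}{n^2}\sum_{j \in \mathcal{O}} \frac{n_j^2 \sigma_j^2}{s'_j}.
\end{equation*}
The first of these two sums is a constant determined entirely by the frozen sizes $\{s_j\}_{j \in \mathcal{U}}$, while only the second depends on the eviction choices. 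Hence minimizing $V$ over all variance-optimal evictions of $\beta$ elements from $\mathbb{S}$ under budget $M'$ is equivalent to minimizing $\sum_{j \in \mathcal{O}} \frac{n_j^2 \sigma_j^2}{s'_j}$ subject to $\sum_{j \in \mathcal{O}} s'_j = M' - \sum_{j \in \mathcal{U}} s_j$ and $0 \le s'_j \le s_j$ for each $j \in \mathcal{O}$.

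This is literally the \vor\ problem as formulated in Equations~\ref{eqn:obj1}--\ref{eqn:cons1.2}, but restricted to the stratum set $\mathcal{O}$ with target memory $M' - \sum_{j \in \mathcal{U}} s_j$, and with exactly $\beta$ elements to be evicted (since the total size of $\mathcal{O}$ shrinks from $\sum_{j \in \mathcal{O}} s_j = M - \sum_{j \in \mathcal{U}} s_j$ to $M' - \sum_{j \in \mathcal{U}} s_j$, a reduction of $M - M' = \beta$). This establishes the lemma.

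I do not expect any substantive obstacle: the entire argument is a bookkeeping reduction once Lemma~\ref{lem:evict-o} is in hand. The only subtlety to be careful about is making sure the box constraints $s'_j \le s_j$ for $j \in \mathcal{O}$ are preserved in the sub-problem (they are, because they are inherited verbatim from the original program) and verifying that the reduced memory budget remains feasible, which follows from $\sum_{j \in \mathcal{O}} s_j \ge M' - \sum_{j \in \mathcal{U}} s_j$ (equivalent to $\sum_j s_j = M \ge M'$).
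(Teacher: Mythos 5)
Your proof is correct and follows essentially the same route as the paper's: both invoke Lemma~\ref{lem:evict-o} to conclude that the non-oversized strata keep their current sizes, then use the additive separability of the objective in Equation~\ref{eqn:var} to discard the constant terms and identify the residual problem as variance-optimal reduction on $\mathcal{O}$ under budget $M'-\sum_{j\in\mathcal{U}}s_j$. Your explicit check of the box constraints and of feasibility of the residual budget is slightly more careful than the paper's writeup, but it is the same argument.
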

\begin{proof}
%
  Recall that $s'_i$ denotes the final size of sample $S_i$ after
  $\beta$ elements are evicted. Referring to the variance $V$ from
  Equation~\ref{eqn:var}, we know a variance-optimal sample size
  reduction  of $\beta$ elements from
  $\mathbb{S}$ under memory budget $M'$ requires to minimize
\begin{eqnarray}
\sum_{i\in\mathcal{A}} \frac{n_i^2 \sigma_i^2}{s'_i}
- \sum_{\in\mathcal{A}}\frac{n_i^2 \sigma_i^2}{s_i}\label{eqn:voe1}
\end{eqnarray}

By Lemma~\ref{lem:evict-o}, we know $s_i = s'_i$ for all
$i \in \mathcal{U}$. Hence, minimizing Formula~\ref{eqn:voe1} is
equivalent to minimizing
\begin{eqnarray}
  \sum_{i=\mathcal{O}} \frac{n_i^2 \sigma_i^2}{s'_i} - \sum_{i\in \mathcal{O}} \frac{n_i^2 \sigma_i^2}{s_i}\label{eqn:voe2}
\end{eqnarray}

The minimization of Formula~\ref{eqn:voe2} is exactly the result
obtained from a variance-optimal sample size reduction 
of $\beta$ elements from oversized samples under
the new memory budget $M'-\sum_{i \in \mathcal{U}} s_i$.
\end{proof}


\begin{observation}
\label{ob:exit}
In the case every sample in the \srs\ is oversized under target memory $M'$, \ie $\mathbb{S} = \mathcal{O}$, the variance-optimal reduction is to reduce the size of each sample $S_i \in \mathbb{S}$ to its Neyman allocation $M'_i$ under the new memory budget $M'$.
\end{observation}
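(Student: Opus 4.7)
The plan is to treat this as a constrained convex optimization problem and show that, under the hypothesis $\mathbb{S} = \mathcal{O}$, the Neyman allocation for budget $M'$ is both feasible and optimal. Concretely, I would set up the program from Equations~\ref{eqn:obj1}--\ref{eqn:cons1.2} with $\beta = M - M'$. The objective $\sum_{i} n_i^2\sigma_i^2/s'_i$ is strictly convex on $s'_i > 0$ (each term is convex and positive coefficients preserve convexity), and the feasible set is a compact, convex polytope, so the minimizer is unique and characterized by the KKT conditions.

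First, I would consider the relaxed problem obtained by dropping the upper-bound constraints $s'_i \leq s_i$, keeping only $s'_i \geq 0$ and $\sum_i s'_i = M'$. Introducing a Lagrange multiplier $\lambda$ for the equality constraint and setting $\partial/\partial s'_i = 0$ yields $-n_i^2\sigma_i^2/(s'_i)^2 + \lambda = 0$, hence $s'_i \propto n_i\sigma_i$. Combined with $\sum_i s'_i = M'$, this gives exactly $s'_i = M'_i = M' \cdot n_i\sigma_i / \sum_{j} n_j\sigma_j$, i.e., the Neyman allocation under budget $M'$. Since $M'_i > 0$, the lower bound constraints are strictly satisfied.

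Next, I would use the hypothesis that every stratum in $\strata$ is oversized under $M'$: by Definition~\ref{def:oversized}, this means $s_i > M'_i$ for every $i$. Therefore the candidate solution $s'_i = M'_i$ automatically satisfies the upper-bound constraint $s'_i \leq s_i$ strictly. Because the unconstrained-in-upper-bound minimizer already lies inside the region defined by the upper bounds, it remains the minimizer of the fully constrained problem (adding constraints that are inactive at the optimum cannot change the optimum). By strict convexity, this minimizer is unique, which matches the statement of the observation.

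I do not anticipate a serious obstacle here: the only subtlety is making explicit that the hypothesis $\mathbb{S} = \mathcal{O}$ is precisely what lets us ignore the box constraint $s'_i \leq s_i$, so Neyman allocation — which is the unconstrained variance-optimal choice — becomes feasible. If one prefers an algebraic argument over Lagrange multipliers, the same conclusion can be obtained by Cauchy--Schwarz: for any feasible $(s'_i)$,
\[
\Bigl(\sum_i n_i\sigma_i\Bigr)^2 = \Bigl(\sum_i \frac{n_i\sigma_i}{\sqrt{s'_i}} \cdot \sqrt{s'_i}\Bigr)^2 \leq \Bigl(\sum_i \frac{n_i^2\sigma_i^2}{s'_i}\Bigr)\Bigl(\sum_i s'_i\Bigr) = M' \sum_i \frac{n_i^2\sigma_i^2}{s'_i},
\]
with equality iff $s'_i \propto n_i\sigma_i$, i.e., $s'_i = M'_i$. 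The oversized hypothesis ensures this equality case is in the feasible region, completing the argument.
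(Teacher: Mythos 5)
Your proposal is correct and follows exactly the reasoning the paper relies on: the paper states this as an unproved observation, appealing to the earlier remark that without the upper-bound constraints $s'_i \leq s_i$ the program in (\ref{eqn:obj1})--(\ref{eqn:cons1.2}) is solved by Neyman allocation, and your argument simply makes this explicit --- the hypothesis $\mathbb{S}=\mathcal{O}$ means $s_i > M'_i$ for all $i$, so the relaxed optimum is feasible and hence optimal for the constrained problem. Both your Lagrange-multiplier derivation and the Cauchy--Schwarz equality-case argument are valid ways to fill in the step the authors left implicit.
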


The following theorem summarizes the correctness and time complexity of Algorithm \ssr. 

\begin{theorem}
\label{thm:ssr}
Algorithm~\ref{algo:reduction-batch} ($\ssr$) finds a variance-optimal reduction of the stratified random sample $\strata$ under new memory budget $M$. The worst-case time of $\ssr$ is $O(r^2)$, where $r$ is the number of strata.
\end{theorem}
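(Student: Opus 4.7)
My plan is to prove correctness by induction on $|\strata|$, leveraging Lemma~\ref{lem:evict-o}, Lemma~\ref{lem:recur}, and Observation~\ref{ob:exit} as the three building blocks. The base case corresponds exactly to when $\mathcal{O} = \strata$ inside the algorithm: every sample is oversized under the current memory budget $M$, so the upper bound constraint $s'_i \le s_i$ in Equation~\ref{eqn:cons1.1} is slack at the Neyman optimum (since by definition of oversized, $M_i < s_i$), and the optimization reduces to the standard Neyman problem, whose optimum is $\mathcal{L}[i] = M_i$. This is exactly what $\ssr$ returns in the base case, and its correctness is the content of Observation~\ref{ob:exit}.

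For the inductive step, suppose $\mathcal{O} \subsetneq \strata$. The algorithm assigns $\mathcal{L}[j] = s_j$ to each non-oversized stratum $j \in \strata \setminus \mathcal{O}$ and then recurses on $(\mathcal{O}, M - \sum_{j \in \strata \setminus \mathcal{O}} s_j)$. Lemma~\ref{lem:evict-o} guarantees that any variance-optimal reduction on $\strata$ evicts nothing from non-oversized strata, so fixing $\mathcal{L}[j] = s_j$ for $j \notin \mathcal{O}$ loses no optimality. Lemma~\ref{lem:recur} then tells us that the residual problem is itself a variance-optimal reduction instance on $\mathcal{O}$ with budget $M - \sum_{j \in \strata \setminus \mathcal{O}} s_j$. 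Since $|\mathcal{O}| < |\strata|$, the inductive hypothesis applies, so the recursive call returns the optimum for the sub-problem, and gluing with the fixed non-oversized allocations yields the optimum for the full problem.

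For termination and runtime, observe that in any non-base-case call we have $|\mathcal{O}| \le |\strata| - 1$, so the recursion depth is at most $r$. Each call performs $O(|\strata|)$ work: one pass over $\strata$ to compute $\sum_{t \in \strata} n_t \sigma_t$ and the Neyman values $M_j$, and a second pass to partition $\strata$ into $\mathcal{O}$ and its complement and to record $\mathcal{L}[j]$ for the non-oversized strata. Summing the per-level work over at most $r$ levels gives $O(r) + O(r-1) + \cdots + O(1) = O(r^2)$ in the worst case.

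The main subtlety I expect is in the inductive step: verifying that Lemma~\ref{lem:evict-o} and Lemma~\ref{lem:recur}, which are phrased for the \emph{original} problem on $\mathbb{S}$, apply verbatim to the sub-instance $(\mathcal{O}, M - \sum_{j \in \strata \setminus \mathcal{O}} s_j)$. This reduces to noting that the factor $1/n^2$ (with $n$ the original population size) appearing in Equation~\ref{eqn:var} is a positive constant independent of the decision variables $s'_i$, so argmin of the objective is unchanged when restricted to any subset of strata; in particular, the proof of Lemma~\ref{lem:evict-o} only compares ratios of the form $M'^2_{\alpha}/(s_\alpha(s_\alpha - 1))$ and thus goes through identically when the ambient stratum set is $\mathcal{O}$ and the Neyman allocations are recomputed with respect to $\sum_{t \in \mathcal{O}} n_t \sigma_t$, which is exactly what the algorithm does on line~\ref{line:batch-neyman-2.2} at each recursive level.
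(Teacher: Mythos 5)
Your proposal is correct and follows essentially the same route as the paper: the paper's proof simply cites Lemma~\ref{lem:evict-o}, Lemma~\ref{lem:recur}, and Observation~\ref{ob:exit} for correctness and bounds the runtime by $O(r+(r-1)+\cdots+1)=O(r^2)$, which is exactly the induction and per-level accounting you spell out. Your added remark that the lemmas transfer to each sub-instance because the objective restricted to $\mathcal{O}$ is separable and the Neyman allocations are recomputed over $\sum_{t\in\mathcal{O}}n_t\sigma_t$ is a worthwhile elaboration of a step the paper leaves implicit, not a deviation from its argument.
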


\begin{proof}
Correctness follows from Lemmas~\ref{lem:evict-o}--\ref{lem:recur} and Observation~\ref{ob:exit}. The worst-case time happens when each recursive call sees only one stratum that is not oversized. In such a case, the time of all recursions of $\ssr$ on a \srs\ across $r$ strata is: $O(r+(r-1)+\ldots+1) = O(r^2)$.
\end{proof}

Although $\ssr$ takes $O(r^2)$ time in the worst case, its time complexity tends to be much better in practice.  
If the number of samples that are not oversized contributes at least~a certain percentage of the total number of samples being considered in every recursion, its overall time cost will be $O(r)$.

\subsubsection{A faster implementation}
We also present an iterative algorithm for sample size reduction, \fssr,  with time complexity $O(r\log r)$. \fssr\ shares the same algorithmic foundation as $\ssr$, but uses a faster method to find samples that are not oversized. 


\remove{
\begin{theorem}
\label{thm:fssr}
There is an algorithm $\fssr$ for variance-optimal sample size reduction on $r$ strata, whose worst-case time complexity is $O(r\log r)$.
\end{theorem}
}

\newcommand{\sumsofar}{D}

\begin{algorithm}[t]
\DontPrintSemicolon
\caption{\fssr($M$): A fast implementation of Sample Size Reduction without using recursion.}
\label{algo:reduction-batch-fast}

\KwIn{The strata under consideration is implicitly
  $\{1,2,\ldots,r\}$. $M$ is the target total sample size.}
\KwOut{For $1 \le i \le r$, $\mathcal{L}[i]$ is set to the final size
  of sample for stratum $i$, such that the increase of the variance
  $V$ is minimized.}

Allocate $\mathcal{L}[1..r]$, an array of numbers\; 

Allocate  $Q[1..r]$, an array of $(x,y,z)$ tuples\;
\lFor{$i=1 \ldots r$}{
  $Q[i] \gets (i, n_i \sigma_i, s_i / (n_i \sigma_i))$;
}

Sort array $Q$ in ascending order on the $z$ dimension\;\label{line:fssr-sort} 

\For{$i= (r-1)$ down to $1$}{
  $Q[i].y \gets Q[i].y + Q[i+1].y$
}

\medskip
$M_{new}\gets M$;
$\sumsofar \gets Q[1].y$\;

\For{$i=1\ldots r$}{
    $M_{Q[i].x} \gets M \cdot n_{Q[i].x}\sigma_{Q[i].x} / \sumsofar$\;
    \lIf{$s_{Q[i].x} > M_{Q[i].x}$}{
      break
    } 
    $\mathcal{L}[Q[i].x] \gets s_{Q[i].x]}$\;
    $M_{new} \gets M_{new}-s_{Q[i].x}$\;

\medskip 

    \tcp{Check the next sample, which must exist.}
    $M_{Q[i+1].x} \gets M \cdot n_{Q[i+1].x}\sigma_{Q[i+1].x} / \sumsofar$\;
    \If(\tcp*[h]{oversized}){$s_{Q[i+1].x} > M_{Q[i+1].x}$}{
      $M \gets M_{new}$;
      $\sumsofar \gets Q[i+1].y$\;
    }
 }

\medskip 

\tcp{Reduce sample size to target.}
\For{$j=i..r$}{
  \tcp{Desired size for $S_{Q[j].x}$}
  $\mathcal{L}[{Q[j].x}] \gets M  \cdot
  n_{Q[j].x}\sigma_{Q[j].x}/\sumsofar$\;
}

\Return{$\mathcal{L}$}\;
\end{algorithm}

\begin{definition}
Let $Q[1..r]$ be an array of $(x,y,z)$ tuples, where each $Q[i]$ is
initialized as $(i, n_i\sigma_i, s_i/(n_i\sigma_i))$.  
Array $Q$ is then sorted on its $z$ dimension.
\end{definition}

\begin{lemma}
\label{lem:fssr-obs}
Under any given memory budget $M$, if there exists at least one
unoversized sample, the collection of the identifiers of the
unoversized samples must be occupying a continuous prefix of the array
$Q$.
\end{lemma}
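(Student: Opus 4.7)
The plan is to translate the property of being unoversized into a simple threshold condition on the $z$-coordinate $s_i/(n_i\sigma_i)$ by which $Q$ is sorted, after which the prefix property becomes immediate. First, I would unfold the definition. Under budget $M$ applied to the strata in $Q$, stratum $i$ is unoversized iff $s_i \leq M_i$, where $M_i = M \cdot n_i\sigma_i / T$ and $T = \sum_{j} n_j\sigma_j$ is the total of the $n_j\sigma_j$ values over the strata represented in $Q$. Assuming $n_i\sigma_i > 0$, dividing both sides by $n_i\sigma_i$ gives the equivalent condition
\[
\frac{s_i}{n_i\sigma_i} \;\leq\; \frac{M}{T},
\]
i.e.\ $Q[i].z \leq M/T$. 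Note that this threshold $M/T$ depends only on $M$ and on the set of strata present in $Q$, not on $i$.

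Next I would use the sort order to conclude. Since $Q$ is sorted in ascending order of the $z$ coordinate, the set of indices $i$ with $Q[i].z \leq M/T$ is exactly of the form $\{1,2,\ldots,k\}$ for some $k \geq 0$: if $Q[i].z \leq M/T$ and $j < i$, then $Q[j].z \leq Q[i].z \leq M/T$, so $Q[j]$ is also unoversized. By the equivalence derived above, this is exactly the set of identifiers of the unoversized samples. If at least one unoversized sample exists then $k \geq 1$, and this $k$-prefix of $Q$ gives the claimed contiguous prefix.

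There is no real obstacle here; the whole statement is a reformulation of the definition once you notice that $z_i = s_i/(n_i\sigma_i)$ is a monotone surrogate for the ``oversized'' predicate under a fixed budget and a fixed stratum set. The only mild subtlety is a degenerate case in which $n_i\sigma_i = 0$ for some stratum; such a stratum contributes nothing to $T$ and receives Neyman allocation $M_i = 0$, so it is automatically oversized whenever $s_i > 0$ and trivially unoversized when $s_i = 0$. Either excluding it from $Q$ or using the convention $z_i = +\infty$ (which parks it at the tail under the ascending sort) preserves the prefix conclusion, so this case does not affect the argument.
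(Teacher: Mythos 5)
Your proof is correct and is essentially the same as the paper's: both reduce the unoversized condition $s_i \le M_i$ to the threshold test $s_i/(n_i\sigma_i) \le M/T$ (a quantity independent of $i$) and then invoke the ascending sort on the $z$ coordinate. Your version is in fact slightly more careful, keeping the factor $M$ in the threshold (the paper's proof writes $1/\sumsofar$ where $M/\sumsofar$ is meant) and noting the degenerate case $n_i\sigma_i = 0$, which the paper ignores.
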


\begin{proof}
  Recall that under a memory budget $M$, the Neyman allocation size
  for stratum $i$ is $M_i = n_i\sigma_i / \sumsofar$, where
  $\sumsofar = \sum_{i=1}^r n_j\sigma_j$.  A sample $S_i$ is not
  oversized if and only if $s_i \leq M_i$, i.e.,
  $s_i/(n_i\sigma_i) \leq 1/\sumsofar$.  A sample $S_i$ is oversized
  if and only if $s_i > M_i$, i.e., $s_i/(n_i\sigma_i) > 1/\sumsofar$.
  Because array $Q$ is in the ascending order of its $z$ dimension,
  the lemma is proved.
\end{proof}

Lemma~\ref{lem:fssr-obs} implies that we can linearly walk along the
array $Q$ from $Q[1]$ toward $Q[r]$.  By comparing the sample size and
the Neyman allocation size for each stratum we are looking at during
the walk, we will be able to find the collection of samples that are
not oversized, under the new target memory budget $M'$.  

After finding the
prefix of the $Q$ array that represents the collection of samples that
are not oversized, we pause the walk and then set the new memory $M'$
budget to be $M'$ minus the total size of the samples in the
prefix. 
Then, we treat the remaining part (after
excluding the prefix) of the array $Q$ as the current array $Q$ and do the
same walk under the new memory budget $M'$.

The walk will stop if we do not see any sample that is not oversized
under the current memory budget $M'$. In that case, we just set the size of the
samples in the current array $Q$ to be their Neyman
allocation size, under the current memory budget.

In order to avoid the recomputation of $\sumsofar$, which is needed in
computing the Neyman allocation, for every new memory budget during
the walk, we precompute the $\sumsofar$ for every suffix of the array
$Q$ and save the result in the $y$ dimension of the $Q$ array.

The method $\fssr$ in Algorithm~\ref{algo:reduction-batch-fast}
shows the pseudocode of this faster algorithm for variance-optimal
sample size reduction.

\begin{theorem}
\label{thm:fssr}
(1) The $\fssr$ procedure in Algorithm~\ref{algo:reduction-batch-fast}
finds the correct size of each sample of an \srs, whose memory
budget is reduced to $M$, 
such that the increase of the variance $V$ is minimized. 
(2) The worst-case time cost of $\fssr$ on a \srs\  across $r$ strata
is $O(r\log r)$.
\end{theorem}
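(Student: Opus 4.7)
}

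My plan is to reduce the correctness of $\fssr$ to that of $\ssr$, which is already established by Theorem~\ref{thm:ssr}. The core idea is that $\fssr$ simulates the recursion of $\ssr$ iteratively by exploiting the fact that the quantities $z_i = s_i/(n_i\sigma_i)$ are intrinsic to the strata and do not depend on the memory budget, so a single sort up front suffices.

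For part (1), I would first invoke Lemma~\ref{lem:fssr-obs}: under the current memory budget $M$, the identifiers of the unoversized samples form a contiguous prefix of $Q$, because $S_i$ is unoversized iff $z_i \le 1/\sumsofar$, and $Q$ is sorted in ascending $z$. Thus a left-to-right linear scan of $Q$ correctly identifies this prefix, and each such sample is frozen at its current size $s_i$, matching the \textbf{else} branch of $\ssr$. When a first oversized sample $Q[i{+}1]$ is encountered, the algorithm updates the budget to $M_{\mathrm{new}} = M - \sum_{\text{prefix}} s_j$ and replaces $\sumsofar$ by the suffix sum $\sum_{j \ge i+1} n_{Q[j].x}\sigma_{Q[j].x}$, precomputed in the $y$-dimension. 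Since the $z$-values are unchanged and the remaining strata occupy a suffix of the already-sorted $Q$, Lemma~\ref{lem:fssr-obs} applies again to this suffix with the new budget $M_{\mathrm{new}}$. The walk therefore emulates the recursion in Algorithm~\ref{algo:reduction-batch} exactly, peeling off one prefix of unoversized samples per ``recursive level.'' The walk terminates once the current $Q[i{+}1]$ fails to be oversized relative to the latest $M_{\mathrm{new}}$ and $\sumsofar$ (so all remaining samples are oversized, matching the base case $\mathcal{O}=\strata$). At that point, the final loop assigns each remaining stratum its Neyman allocation under the current budget, which is precisely what Observation~\ref{ob:exit} prescribes. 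Combining these, the final array $\mathcal{L}$ produced by $\fssr$ coincides with the output of $\ssr$, so correctness follows from Theorem~\ref{thm:ssr}.

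For part (2), the runtime accounting is straightforward: initializing $Q$ takes $O(r)$, sorting $Q$ by $z$ is $O(r\log r)$, and the backward pass that builds the suffix sums in the $y$-field is $O(r)$. The main forward walk advances the index $i$ monotonically from $1$ to $r$ and performs $O(1)$ work per iteration (one Neyman-ratio comparison, and possibly an update of $M$ and $\sumsofar$, both read directly from the precomputed $Q[i{+}1].y$). The final fill-in loop for the still-oversized suffix is also $O(r)$. Summing yields $O(r\log r)$, dominated by the sort.

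The main obstacle I foresee is the bookkeeping that justifies reusing the same sorted array $Q$ across all ``recursive levels'' without re-sorting: one must verify that (i) the $z$-order is invariant under the level transitions (which holds because $z_i$ depends only on $s_i,n_i,\sigma_i$, none of which change when a stratum is frozen or the budget is reduced), and (ii) the updated $\sumsofar$ for the next level is exactly the suffix sum stored in $Q[i{+}1].y$. Both are immediate from the construction, so once these invariants are stated explicitly, the induction over levels that maps $\fssr$'s iterations to $\ssr$'s recursive calls goes through cleanly.
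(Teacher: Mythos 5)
Your proposal is correct and follows essentially the same route as the paper, whose proof simply cites Lemmas~\ref{lem:evict-o}--\ref{lem:recur}, Observation~\ref{ob:exit}, and Lemma~\ref{lem:fssr-obs} for correctness and the sort for the $O(r\log r)$ bound. You fill in the details the paper leaves implicit --- in particular the invariance of the $z$-order across level transitions and the role of the precomputed suffix sums --- which is a faithful elaboration rather than a different argument.
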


\begin{proof}
  (1) The correctness of the procedure follows from
  Lemmas~\ref{lem:evict-o}--\ref{lem:recur},
  Observation~\ref{ob:exit}, and Lemma~\ref{lem:fssr-obs}.
(2) The time cost of $\fssr$ is dominated by the step of sorting array
$Q$ on its $z$ dimension (Line~\ref{line:fssr-sort}), so the
worst-case time cost of $\fssr$ is $O(r\log r)$.
\end{proof}

\section{{\Large \voila}: Variance-Optimal Offline~SRS}
\label{sec:offline}
We now present an algorithm for computing the variance-optimal allocation of sample sizes in the case when one or more strata may be bounded. Note that the actual sampling step is straightforward for the offline algorithm -- once the allocation of sample sizes is determined, the samples can be chosen in a second pass through the data, using reservoir sampling within each stratum. Hence, in the rest of this section, we focus on determining the allocation. Consider a static data set $R$ of $n$ elements across $r$ strata, where stratum $i$ has $n_i$ elements, and has standard deviation $\sigma_i$. \emph{How can a memory budget of $M$ elements be partitioned among the strata in a variance-optimal manner?}
We present \voila ({\bf V}ariance-{\bf O}pt{\bf I}ma{\bf L} {\bf A}llocation), an efficient offline algorithm for variance-optimal allocation
that can handle strata that are bounded.  \voila~ is a generalization of the classic Neyman allocation -- 
in the case when every stratum has abundant data, it reduces to Neyman allocation.

\remove{
The idea is as follows. Consider the
expression for the variance $V$ in Equation~\ref{eqn:var}. Since parameters $n$, $r$, $n_i$, and
$\sigma_i$, are constants that cannot be affected by this decision,
minimizing $V$ only requires the minimization of
$\sum_{i=1}^r ({n_i^2\sigma_i^2}/{s_i})$. This leads to the following
optimization problem.
\medskip 
\noindent
\begin{eqnarray}
\text{Minimize } \sum_{i=1}^r \frac{n_i^2\sigma_i^2}{s_i}\label{eqn:obj2}
\end{eqnarray}
\noindent
subject to constraints: 
\begin{eqnarray}
0\leq s_i \leq n_i,  \textrm{ for each }i=1,2,\ldots, r\label{eqn:cons2.1}\\
\sum_{i=1}^r s_i = M\label{eqn:cons2.2}
\end{eqnarray}

The similarity in structure between this optimization problem and \vor, formulated in~(\ref{eqn:obj1})--(\ref{eqn:cons1.2}).
}

The following two-step process reduces variance-optimal offline SRS to variance-optimal sample size reduction.

\smallskip
\noindent {\bf Step 1:} Suppose we start with a memory budget of $n$. Then, we will just save the whole data set in the \srs, and thus each sample size $s_i = n_i$. By doing so, the variance $V$ is minimized, since $V=0$ (Equation~\ref{eqn:var}).

\smallskip
\noindent {\bf Step 2:} Given the \srs\ from Step~1, we reduce the memory budget from $n$ to $M$ such that the resulting variance is the smallest. This can be done using variance-optimal sample size reduction, by calling  $\ssr$ or $\fssr$ with target sample size $M$.



\smallskip
\noindent \voila~(Algorithm~\ref{algo:offline}) 
simulates this process. The algorithm only records the sample
sizes of the strata in array $\mathcal{L}$, without creating the actual samples. 
The actual sample from stratum $i$ is created by choosing
$\mathcal{L}[i]$ random elements from stratum $i$, using any method
for offline uniform random sampling without replacement.

\begin{algorithm}[t]
\DontPrintSemicolon
\caption{\voila($M$): Variance-optimal stratified random sampling for bounded data}
\label{algo:offline}
\KwIn{$M$ is the memory target}

  \For{$i=1\ldots r$\label{line:off-for0}}{
    $s_i \gets n_i$\label{line:off-for1} ~\tcp{assume total available memory of $n$}
}

  $\mathcal{L} \gets \fssr (M)$\;\label{line:off-ssr1} 

\Return{$\mathcal{L}$}    \tcc*{$\mathcal{L}[i] \le n_i$ is the sample size for stratum $i$ in a  variance-optimal \srs \label{line:off-ssr}.} 
%
\end{algorithm}

\begin{theorem}
\label{thm:bound}
Given a data set $\stream$ with $r$ strata, and a memory budget $M$, $\voila$ (Algorithm~\ref{algo:offline}) returns in $\mathcal{L}$ the sample size of each stratum in a variance-optimal stratified random sample. The worst-case time cost of $\voila$ is $O(r\log r)$.
\end{theorem}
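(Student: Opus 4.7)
The plan is to verify both claims in the theorem by reducing the offline problem to the variance-optimal sample size reduction (\vor) already analyzed in Section~\ref{sec:reduction}, and then invoking Theorem~\ref{thm:fssr}.

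For correctness, I would first observe that initializing $s_i \gets n_i$ for every stratum corresponds to taking the entire population as a trivial stratified sample; substituting into Equation~\ref{eqn:var} gives $V = 0$, so this is (trivially) a variance-optimal allocation under memory budget $n$. Next, I would compare the \vor\ program in~(\ref{eqn:obj1})--(\ref{eqn:cons1.2}) against the offline variance-optimal allocation problem: starting from $s_i = n_i$, the constraints $0 \le s'_i \le s_i$ in~(\ref{eqn:cons1.1}) become exactly $0 \le s'_i \le n_i$, which is the natural feasibility constraint for sampling without replacement from stratum $i$, and the constraint~(\ref{eqn:cons1.2}) with $M'=M$ enforces the total memory budget. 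Moreover, the objective~(\ref{eqn:obj1}) differs from the variance $V$ in Equation~\ref{eqn:var} only by the additive constant $-\frac{1}{n^2}\sum_i n_i\sigma_i^2$ and the positive factor $\frac{1}{n^2}$, both independent of the $s'_i$. Hence, minimizing~(\ref{eqn:obj1}) subject to these constraints is equivalent to minimizing $V$ over all feasible allocations of $M$ samples among the $r$ strata. By Theorem~\ref{thm:fssr}(1), the call $\fssr(M)$ produces exactly such a minimizer, so the array $\mathcal{L}$ returned by \voila\ records the sample sizes of a variance-optimal stratified random sample.

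For the running time, Line~\ref{line:off-for0}--\ref{line:off-for1} of Algorithm~\ref{algo:offline} runs in $O(r)$ time, and the single invocation of $\fssr(M)$ runs in $O(r\log r)$ worst-case time by Theorem~\ref{thm:fssr}(2). Thus the total time cost is $O(r \log r)$.

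The main subtlety I expect is justifying that the upper-bound constraint $s'_i \le s_i$ inherited from \vor\ coincides with the true feasibility constraint $s'_i \le n_i$ of the offline allocation problem; this is why Step~1 initializes $s_i = n_i$ rather than any other starting allocation. Once that is pinned down, the rest reduces cleanly to previously established lemmas, since \vor\ was designed precisely to perform a variance-optimal one-shot reallocation under a new memory budget, regardless of whether strata are bounded or abundant.
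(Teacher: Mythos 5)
Your proposal is correct and follows essentially the same route as the paper: initialize with the full population as a zero-variance sample of size $n$, observe that variance-optimal reduction to budget $M$ from this starting point is exactly the offline allocation problem (since the \vor\ constraint $s'_i \le s_i$ becomes $s'_i \le n_i$ and the objective differs from $V$ only by constants), and invoke Theorem~\ref{thm:fssr} for both correctness and the $O(r\log r)$ bound. The paper's proof states this reduction in one sentence; you have simply spelled out the equivalence of the two optimization programs more explicitly.
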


\begin{proof}
The correctness follows from the correctness of Theorem~\ref{thm:fssr}, since the final sample is the sample of the smallest variance that one could obtain by reducing the initial sample (with zero variance) down to a target memory of size $M$. The run time is dominated by the call to $\fssr$, whose time complexity is $O(r\log r)$.
\end{proof}

\section{Streaming SRS}
\label{sec:streaming}
We now consider the maintenance of an SRS from a data stream, whose elements are arriving continuously.


\subsection{A Lower Bound for Streaming SRS}
\label{sec:lb}
Given a data stream $\stream$ across $r$ strata, let $V^*$ denote the
sample variance of the stratified random sample created by \voila, 
 using a memory budget of $M$.  Because \voila\ is variance optimal,
$V^*$ is the smallest variance that we can get from any stratified
random sample of $\stream$ under the memory budget $M$. While \voila~
is not a streaming algorithm, $V^*$ is a lower bound on the variance that
a streaming algorithm can achieve, under memory budget $M$.

Let $V$ denote the sample variance of an SRS of $\stream$ using the
same memory budget $M$. We say $V$ is an approximation of $V^*$ with a
multiplicative error of $\alpha$, for some constant $\alpha \geq 1$,
if: (1)~the sample within each stratum $i$ is chosen uniformly at
random without replacement from stratum $i$.  (2)~$V \leq \alpha \cdot V^*$.

\begin{theorem}
\label{thm:lb-var}
Any streaming algorithm for maintaining an SRS over a stream with $r$ strata using a memory of $M$ records must, in the worst case, have a multiplicative error $\Omega(r)$ when compared with the optimal variance that can be achieved by a stratified random sample using memory of $M$ records.
\end{theorem}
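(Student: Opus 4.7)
The plan is to exhibit an adversarial two-phase stream that forces any streaming SRS algorithm to under-sample the single stratum whose importance becomes apparent only after the phase-1 allocation has already been committed. Let $N \gg M$. In phase~1, $N$ items of value $0$ arrive from every one of the $r$ strata, so that every $\sigma_i = 0$ at the end of phase~1 and no allocation can be preferred over any other on the basis of the data seen so far. In phase~2, $N$ additional items arrive in a single stratum $j^*$ (chosen based on the algorithm's state at the end of phase~1), with values giving $\sigma_{j^*}^2$ a positive constant (for concreteness, a half-$0$ half-$1$ split, yielding $\sigma_{j^*}^2 = 1/4$). Since $\sigma_i = 0$ for all $i \ne j^*$ in the final data, Equation~\ref{eqn:var} shows that the offline variance-optimal \voila\ allocation assigns essentially the entire budget $M$ to $j^*$, yielding $V^{*} = \Theta(\sigma_{j^*}^2/((r+1)^2 M))$ for large $N$.

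The core of the argument is a uniformity-preservation lemma that constrains what a streaming algorithm can achieve. At the end of phase~1 any algorithm has allocation $(s_1,\ldots,s_r)$ with $\sum_i s_i \leq M$, so by pigeonhole some stratum $j^*$ satisfies $s_{j^*} \leq M/r$, and the adversary targets this $j^*$. I would then prove that the final sample size $s'_{j^*}$ must satisfy $s'_{j^*} \leq 2 s_{j^*} \leq 2M/r$. The argument: each phase-1 item of $j^*$ appears in the final sample only if it was present in the post-phase-1 sample (probability $s_{j^*}/N$ by uniformity at that moment) and was not subsequently evicted; since the streaming model disallows re-admission of items that have already passed, its overall inclusion probability is at most $s_{j^*}/N$. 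But a uniform-without-replacement sample of size $s'_{j^*}$ from $j^*$'s $2N$ items must include every item with probability exactly $s'_{j^*}/(2N)$, hence $s'_{j^*}/(2N) \leq s_{j^*}/N$, i.e.\ $s'_{j^*} \leq 2 s_{j^*}$.

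Plugging $s'_{j^*} \leq 2M/r$ back into Equation~\ref{eqn:var} and using $\sigma_i = 0$ for $i \neq j^*$ gives $V = \Omega(\sigma_{j^*}^2\, r/((r+1)^2 M))$, so $V/V^{*} = \Omega(r)$, as claimed. The trickiest step will be making the uniformity-preservation lemma fully rigorous against arbitrary streaming algorithms, including those that use adaptive randomness and couple the per-stratum samples in complicated ways; the essential leverage is the ``no re-admission'' property, which caps the per-item probability of every phase-1 item by the value it had at the end of phase~1. For randomized algorithms I would invoke Yao's minimax principle with $j^*$ uniform on $\{1,\ldots,r\}$, so that any deterministic algorithm has $\mathbb{E}[s_{j^*}] \leq M/r$; the lemma then gives $\mathbb{E}[s'_{j^*}] \leq 2M/r$, and convexity of $1/x$ transfers the bound to expected variance, producing the same $\Omega(r)$ lower bound in the randomized setting.
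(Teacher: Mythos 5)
Your proposal is sound and reaches the same $\Omega(r)$ bound, but by a genuinely different construction than the paper's. The paper also hinges on the pigeonhole observation that some stratum holds at most $M/r$ samples, but instead of a zero-variance phase followed by a variance spike, it gives every stratum $\alpha$ elements with a tiny spread $\eps = 1/(r-1)$ (so every $n_i\sigma_i$ is equal and of order $\sqrt{\alpha-1}\,\eps$), then injects a \emph{single} element of value $2-\eps$ into the under-sampled stratum. That one element raises $n_1\sigma_1$ to roughly $\sqrt{\alpha}$, which by the choice of $\eps$ matches $\sum_{i\ge 2} n_i\sigma_i$, so the offline optimum assigns stratum $1$ a constant fraction of $M$ while the streaming algorithm is stuck at $s_1 \le M/r+1$; the $\Omega(r)$ gap then falls out of Equation~\ref{eqn:var} directly. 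The paper's version buys a shorter calculation and avoids degenerate strata (in your stream the $r-1$ strata with $\sigma_i=0$ make the Neyman weights and the $\sigma_i^2/s_i$ terms degenerate at $s_i=0$; give each a token sample of one, which changes nothing asymptotically). Your version buys an explicit and more careful treatment of the step both proofs silently rely on: that a streaming algorithm cannot retroactively inflate a stratum's sample. One refinement to your key lemma: the per-item bound ``inclusion probability at most $s_{j^*}/N$ by uniformity at that moment'' presumes the intermediate sample is uniform and that every retained stratum-$j^*$ record sits in the reported sample rather than elsewhere in the $M$-record memory. It is cleaner to argue in aggregate: for some $j^*$ the expected number of phase-1 items of $j^*$ residing \emph{anywhere} in memory at the end of phase~1 is at most $M/r$; no re-admission then gives $\sum_{e}\pr[e\in S^{\mathrm{final}}_{j^*}]\le M/r$ over phase-1 items, while final uniformity forces this sum to equal $\e[s'_{j^*}]/2$, hence $\e[s'_{j^*}]\le 2M/r$. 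With that adjustment, and Jensen plus Yao as you indicate for randomized algorithms, your argument goes through.
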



\begin{proof}
We use proof by contradiction.  Suppose that it is possible to
maintain an approximate stratified random sample with a multiplicative
error less than $r$.

Consider an input stream where the $i$th stratum consists of elements
in the range $[i,i+1)$, where the right endpoint of the stratum does
not include $i+1$. Suppose the stream so far has the following
elements. For each $i$ from $1$ to $r$, there are $(\alpha-1)$ copies
of element $i$ and one copy of $(i+\eps)$ where $0<\eps <1$ and
$\alpha\geq 3$. After observing these
elements, for each stratum $i$, $1\leq i \leq r$, we have:

$$
n_i = \alpha, \ \ \ \ \ \ \mu_i = i + \frac\eps\alpha,
$$  
$$
\sigma_i =
\sqrt{\left((\alpha-1)\left(\frac{\eps}{\alpha}\right)^2+\left(\eps-\frac{\eps}{\alpha}\right)^2\right)\Big/\alpha}=\frac{\sqrt{\alpha-1}}{\alpha}\eps. 
$$  

Observe that, due to the memory budget $M$, at least one stratum has
its sample size no more than $M/r$. Without loss of generality, 
let's say that stratum is stratum $1$.

Suppose an element of value $(2-\eps)$ arrives in the stream, where
$\eps = 1/(r-1)$. This element belongs
to stratum $1$.  
Let $n_1'$, $\mu_1'$, and $\sigma_1'$
denote the new size, mean, and standard deviation of stratum $1$ after
this element arrives.


$$
n'_1= \alpha+1, \ \ \ \ \ \  \mu'_1 = 1 + \frac{1}{\alpha+1},
$$
\begin{eqnarray*}
\sigma'_1 = \sqrt{\frac{(\alpha-1)\left(\frac{1}{\alpha+1}\right)^2+\left(\eps-\frac{1}{\alpha+1}\right)^2+\left(1-\eps-\frac{1}{\alpha+1}\right)^2}{\alpha+1}}
=\sqrt{\frac{\eps^2 + (1-\eps)^2 - \frac{1}{\alpha+1}}{\alpha+1}}.
\end{eqnarray*}

\noindent
It follows that:
\begin{eqnarray}
  \label{eq:6}
(\alpha + 1)\sqrt{\frac{\frac 1 2 - \frac{1}{\alpha+1}}{\alpha+1}}
\leq n'_1\sigma'_1 
\leq (\alpha + 1)\sqrt{\frac{1- \frac{1}{\alpha+1}}{\alpha+1}}\label{eqn:6-1}\\
\Longrightarrow
\frac{\sqrt{\alpha}}{2}
\leq n'_1\sigma'_1 
\leq \sqrt{\alpha} \ \ \ \ \ \ \ \ (\textrm{Note: $\alpha>2$})\label{eqn:6-3}
\end{eqnarray}
In~\ref{eqn:6-1}, the left inequality stands when $\eps = 1/2$ and the right inequality
stands when $\eps=0$ or $1$. 
We also have: 
\begin{eqnarray*}
  \label{eq:7}
  \sum_{i=2}^r n_i \sigma_i = (r-1) \alpha
  \frac{\sqrt{\alpha-1}}{\alpha} \eps 
= \sqrt{\alpha-1}  \ \ \ \ \left(\textrm{Note: $\eps = \frac{1}{r-1}$}\right)\notag
\end{eqnarray*}
\begin{eqnarray}
\Longrightarrow
\frac{\sqrt{\alpha}}{2} 
\leq 
\sum_{i=2}^r n_i \sigma_i 
\leq \sqrt{\alpha}  \ \ \ \ \ \ \ \ (\textrm{Note: $\alpha>2$})\label{eqn:7-2}
\end{eqnarray}

Let $V$ denote the sample variance of the
stratified random sample maintained over the stream of $(r\alpha +
1)$ elements.  Let $V^*$ denote the smallest sample
variance that one can get from a stratified random sample from these
$(r\alpha+ 1)$ date elements.
Let $\Delta = \left(n'_1{\sigma'}_1^2 + \sum_{i=2}^r n_i\sigma_i^2
\right)\big/n^2$.

We observe the facts that 
(1) after processing these $(r\alpha + 1)$ elements,  the sample size
$s_1\leq M/r + 1$.  (2) The portion of the sample variance contributed
by strata $2,3,\ldots,r$ is minimized if the memory budget for these
strata, which is no more than $M$, are equally shared, because all
$n_i\sigma_i$ are equal for $i=2,3,\ldots, r$. Using these two facts
and the definition of the sample variance in equation~\ref{eqn:var},
we have:
\begin{eqnarray*}
  \label{eq:1}
V &=& \frac{1}{n^2} \left(\frac{{n'}_1^2{\sigma'}_1^2}{s_1}+
\sum_{i=2}^r \frac{n_i^2 \sigma_i^2}{s_i} \right)-  \Delta\\
&\geq&  \frac{1}{n^2} \left(\frac{{n'}_1^2 {\sigma'}_1^2}{M/r+1} 
+ \sum_{i=2}^r \frac{n_i^2 \sigma_i^2}{M/(r-1)} \right)- \Delta\\
&\geq&  \frac{1}{n^2} \left(\frac{\alpha/4}{M/r+1} 
+ \sum_{i=2}^r \frac{(\alpha-1)\eps^2}{M/(r-1)} \right)- \Delta\\
&=&  \frac{1}{n^2} \left(\frac{\alpha/4}{M/r+1} 
+ \frac{\alpha-1}{M} \right)- \Delta  
\ \ \ \ \left(\textrm{Note: $\eps = \frac{1}{r-1}$}\right)
\end{eqnarray*}

On the other hand, the smallest sample variance $V^*$ 
is achieved by
using the Neyman allocation of the memory budget $M$, assuming each
stratum has sufficient data to fill its sample size assigned by the
Neyman allocation. By Inequalities~\ref{eqn:6-3} and~\ref{eqn:7-2}, we
know that in the Neyman allocation for the current stream of $r\alpha
+1$ elements, stratum $1$ uses
at least $M/3$ memory space, whereas all other strata equally share
at least $M/3$ memory space as well because all $n_i\sigma_i$ are
equal for $i=2,3,\ldots, r$. Using these observations into 
Equation~\ref{eqn:var}, we have:
\begin{eqnarray*}
  \label{eq:2}
V^* &\leq &  
\frac{1}{n^2} \left(\frac{{n'}_1^2 {\sigma'}_1^2}{M/3} 
+ \sum_{i=2}^r \frac{n_i^2 \sigma_i^2}{M/3(r-1)} \right)
- \Delta\\
&\leq&
\frac{1}{n^2} \left(\frac{\alpha}{M/3} 
+ \sum_{i=2}^r \frac{(\alpha-1)\eps^2}{M/3(r-1)} \right)
- \Delta\\
&=&
\frac{1}{n^2} \frac{6\alpha-3}{M}
- \Delta \ \ \ \ \ \ \ \ \left(\textrm{Note: $\eps = \frac{1}{r-1}$}\right)
\end{eqnarray*}

Because $\Delta \ge 0$ and $M > r$, we have:

\begin{eqnarray*}
  \label{eq:3}
  \frac{V}{V^*} \geq\frac{V+\Delta}{V^*+\Delta} 
\geq 
\Omega(r)
\end{eqnarray*} 
\end{proof}

The idea in the proof is to construct an input stream with $r$ strata where the variance of all strata 
are the same until a certain point, where the variance of a single stratum increases to a high value -- a variance-optimal
SRS will respond by increasing the allocation to this stratum. However, a streaming algorithm is unable to do so quickly,
since it is in general unable to collect enough samples to satisfy the increased allocation to this stratum. Though a streaming
algorithm is able to compute the variance-optimal allocation to different strata in an online manner, it cannot 
actually maintain these samples using limited memory. 

We also note that the above lower bound is tight, since the simple uniform allocation, which allocates 
$M/r$ memory to each of the $r$ strata that have been observed so far, has a variance which is within a 
multiplicative factor of $r$ of the optimal. However, we see that the policy of uniform allocation performs 
poorly in practice, since it does not distinguish between different strata, whether based on volume or variance.

\subsection{{\Large \svoila}: Practical Streaming SRS}
\label{sec:svoila}
We now present $\svoila$, a practical streaming algorithm for stratified random sampling, that works for streams with zero or more bounded strata. Choices made by \svoila~ are "locally optimal" in the following sense: when new stream elements arrive, the decision of whether or not to select these elements (which will make it necessary to discard sampled elements from other strata) is made in a way that minimizes the variance of the estimate from resulting sample. \svoila~ can be viewed as an online version of \voila, which constructs an SRS with minimal variance using a multi-pass algorithm through the entire data. 


Let $\stream$ denote the stream so far, and $\stream_i$ the substream of elements belonging to stratum $i$. Within a single stratum, any algorithm for SRS needs to maintain a uniform random sample of all data seen so far. In streaming SRS, the memory $s_i$ allocated to a stratum $i$ may change with time, depending on the data arriving within this stratum, and other strata. One issue for a streaming algorithm is to maintain a uniform random sample within stratum $i$ when $s_i$ is changing. A decrease in the allocation $s_i$ can be handled easily, through discarding randomly chosen elements from the current sample $S_i$ until the desired sample size is reached. What if we need to increase the allocation to stratum $i$? If we simply start sampling new elements according to the higher allocation to stratum $i$, then recent elements in the stream will be favored over the older ones, and the sample within stratum $i$ is no longer uniformly chosen. In order to ensure that $S_i$ is always chosen uniformly at random from $\stream_i$, newly arriving elements in $\stream_i$ need to be held to the same sampling threshold as older elements, even if the allotted sample size $s_i$ increases. 

$\svoila$ maintains sample $S_i$ as follows. An arriving element from $\stream_i$ is assigned a random "key" drawn uniformly from the interval $(0,1)$. The algorithm maintains the following invariant: {\em $S_i$ is the set of $s_i$ elements with the smallest keys among all elements so far in $\stream_i$.} Note that this means that if we desire to increase the allocation to stratum $i$, then this may not be accomplished immediately, since a newly arriving element in $\stream_i$ may not be assigned a key that meets this sampling threshold. Instead, the algorithm has to wait until it receives an element in $\stream_i$ whose assigned key is small enough.  In order to ensure the above invariant, the algorithm maintains, for each stratum $i$, a variable $d_i$ that tracks the smallest key of an element in $\stream_i$ that is not currently included in $S_i$. If an arriving element in $\stream_i$ has a key that is smaller than or equal to $d_i$, it is
included within $S_i$; otherwise, it is not. 

\begin{algorithm}[t]
\caption{$\svoila$: Initialization}
\label{algo:ssna-init}
\KwIn{$M$ -- total sample size, $r$ -- number of strata.}
\tcp{$S_i$ is the sample for stratum $i$, and $\stream_i$ is the substream of elements from Stratum $i$}

Load the first $M$ stream elements in memory, and partition them into $r$ per-stratum samples, $S_1, S_2, \ldots, S_r$, such that $S_i$ consists of $(e,d)$ tuples from stratum $i$, where $e$ is the element, $d$ is the key of the element, chosen uniformly at random from $(0,1)$.

For each stratum $i$, compute $n_i$, $\sigma_i$. Initialize $d_i \gets 1$, where $d_i$ is the smallest key among all elements in $\stream_i$ not selected in $S_i$.
\end{algorithm}

\begin{algorithm}[t]
  \caption{$\svoila$: Process a new minibatch $B$ of $b$ elements. Note that $b$ need not be fixed, and can vary from one minibatch to the other.}
\label{algo:ssna-batch}

$\beta \gets 0$\tcp*{\#selected elements in the minibatch}

\For{each $e\in B$\label{line:for}}{
    Let $\alpha$ denote the stratum of $e$
    
    Update $n_\alpha$ and $\sigma_\alpha$
    
    Assign a random key $d\in (0,1)$ to element $e$\;

    \If(\tcp*[f]{element $e$ is selected}){$d \leq d_\alpha$}{
      $S_\alpha \leftarrow \{e\} \bigcup S_\alpha$;
      $\beta \gets \beta + 1$\label{line:beta}\;
    }
}

\tcc{Variance-optimal eviction of $\beta$ elements}

\If(\tcp*[f]{faster for evicting 1 element}){$\beta =1 $\label{line:svoila-evict-if-start}}{
$\ell \gets \sssr()$\;
      \hspace*{-0.5mm}Delete one element of largest key from $S_\ell$\;
      $d_{\ell} \gets $ smallest key discarded from $S_{\ell}$\;\label{line:svoila-evict-if-end}
}
\ElseIf{$\beta > 1$\label{line:svoila-evict-else-start}}{
$\mathcal{L} \gets \fssr(M)$\; 

  \For(\tcp*[f]{Actual element evictions}){$i=1\ldots r$}{
    \If{$\mathcal{L}[i] < s_i$}{
      Delete $s_i- \mathcal{L}[i]$ elements of largest keys from~$S_i$\;
      $d_{i} \gets $ smallest key discarded from $S_{i}$\;\label{line:svoila-evict-else-end}
    }
  }
}
\end{algorithm}

Algorithm~\ref{algo:ssna-init} presents the initialization of $\svoila$, which simply loads the first $M$ stream elements into the memory budget and divides them into $r$ samples $S_1, S_2, \ldots,S_r$, and initializes state. As new elements arrive, they change the frequency and the variance of a stratum and may lead to changes in the desired allocation of samples to strata. While it is possible to recompute the variance-optimal allocation, it is not possible to sample additional elements into strata as necessary, since we do not have the ability to look at all the data seen so far. However, our algorithm locally optimizes the variance through carefully selecting the strata from which samples will be discarded to make way for one or more incoming sampled elements.

\svoila\ supports the insertion of a minibatch of any size $b\geq 1$, where the value of $b$ is even allowed to be dynamic during the execution of \svoila.  When users fix $b=1$, \svoila\ becomes streaming algorithm that handles one element at a time. As the value $b$ increases, we can expect \svoila\ to have a better variance, since its optimization decisions are based on greater amount of data. Algorithm~\ref{algo:ssna-batch} presents the algorithm for maintaining the stratified random sample when a new minibatch of multiple elements arrives. Lines~\ref{line:for}--\ref{line:beta} make one pass through the minibatch to update the statistics of each stratum and store the selected elements into the sample.  If $\beta>0$ elements from the minibatch get selected into the sample, in order to balance the memory budget at $M$,  we will need to evict $\beta$ elements from the \srs -- this is accomplished using the variance-optimal sample size reduction technique from Section~\ref{sec:reduction}.  For the special case where we only need to evict one element, we can use the faster algorithm $\sssr$ (Lines~\ref{line:svoila-evict-if-start}--\ref{line:svoila-evict-if-end}); otherwise, $\fssr$ is used (Lines~\ref{line:svoila-evict-else-start}--\ref{line:svoila-evict-else-end}).

Lemma~\ref{lem:svoila-uniform} shows that the sample maintained within by \svoila within each stratum is a uniform random sample, showing this is a valid stratified sample. 


\begin{lemma}
\label{lem:svoila-uniform}
For each $i=1,2,\ldots, r$ sample $S_i$ maintained by \svoila~(Algorithm~\ref{algo:ssna-batch}) is selected uniformly at random without replacement from stratum $R_i$.
\end{lemma}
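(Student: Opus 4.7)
The plan is to introduce an invariant that uniquely characterizes which elements sit in $S_i$, and then to derive the required uniformity from it through an exchangeability argument. The \emph{key invariant} I will maintain is: at the end of processing any minibatch, $S_i$ consists of the $s_i$ elements of $\stream_i$ with the smallest assigned keys, and $d_i$ equals the smallest key among elements of $\stream_i \setminus S_i$ (taking $d_i = 1$ if $S_i = \stream_i$). Once this invariant is in hand, uniformity of $S_i$ will follow from the fact that the keys are i.i.d.\ uniform on $(0,1)$.

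I would prove the invariant by induction on the number of processed minibatches. The base case is immediate, since Algorithm~\ref{algo:ssna-init} initializes every $S_i = \stream_i$ and $d_i = 1$. For the inductive step, assume the invariant holds before a new minibatch. In the for loop of Algorithm~\ref{algo:ssna-batch}, an arriving element $e$ of stratum $\alpha$ with key $d \le d_\alpha$ is inserted into $S_\alpha$; this preserves the invariant because, by hypothesis, every element currently outside $S_\alpha$ has key $\ge d_\alpha \ge d$, so $e$'s key indeed belongs among the smallest of the updated $\stream_\alpha$. Conversely, elements with $d > d_\alpha$ are correctly rejected, since $S_\alpha$ already contains $s_\alpha$ strictly smaller keys. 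Although $d_\alpha$ is not updated inside the loop, it stays valid: the ``boundary element'' that originally realized $d_\alpha$ remains in $\stream_\alpha \setminus S_\alpha$, so it continues to be the smallest key outside $S_\alpha$ throughout the loop. Finally, the eviction phase evicts the largest-key elements from each shrunken stratum and resets $d_i$ to the smallest discarded key, so the invariant is restored with the new $s_i = \mathcal{L}[i]$.

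With the invariant in hand, I would derive uniformity by an exchangeability argument. Fix the stream element values and the minibatch sizes; these determine the evolving $n_i$ and $\sigma_i$ that drive the allocation. Consider any permutation of the keys among the identities of $\stream_i$ that preserves the multiset of keys. By the invariant, at every point $S_i$ is the set of the $s_i$ smallest keys, so $d_i$ equals the $(s_i+1)$-st order statistic of the keys in $\stream_i$ and thus depends only on that multiset. Consequently, the number of new stratum-$i$ elements that pass the threshold in each minibatch, and hence the entire trajectory of $s_i$, is invariant under the permutation, while $S_i$ itself is merely the permuted image. Since keys are i.i.d.\ uniform, all such permutations of identities are equally likely, so conditional on $s_i$ and $\stream_i$ the set $S_i$ is uniform over the size-$s_i$ subsets of $\stream_i$. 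The main obstacle is precisely this last step: $s_i$ is a complicated, key-dependent random variable shaped by the evolving allocation decisions of \fssr, and one must carefully verify that its dependence on the keys of stratum $i$ factors through their multiset rather than through the specific identities to which those keys are attached.
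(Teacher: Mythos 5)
Your first part --- the induction showing that $S_i$ is always the set of $s_i$ smallest-key elements of $\stream_i$ and that $d_i$ is the smallest key outside $S_i$ --- is correct, and it is essentially the invariant that the paper states in Section~\ref{sec:svoila} and uses in its own proof; you simply work it out in more detail. The problem is the second part. The exchangeability claim you rely on --- that the trajectory of $s_i$ is invariant under an arbitrary permutation of the key assignment within $\stream_i$ because $d_i$ ``depends only on the multiset'' of keys --- is false. At any intermediate time, $d_i$ is the $(s_i+1)$-st order statistic of the keys of only those elements of $\stream_i$ that have \emph{already arrived}, and admission of a new element is tested against that threshold. A permutation that exchanges the key of an early-arriving element with that of a late-arriving one changes both the threshold in force when the late element arrives and the key that element carries; this can flip the admission decision, change $\beta$, and hence change the evictions computed by $\sssr$/$\fssr$ and the entire subsequent trajectory of every $s_j$. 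So the trajectory does \emph{not} factor through the multiset of keys, and the conditional-uniformity conclusion does not follow from your argument. You flag exactly this step as ``the main obstacle,'' which is the right instinct, but flagging it is not closing it, and the closure you sketch is wrong as stated.

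The gap is not merely technical. Suppose a stratum holds $5$ elements after a reduction, with $s_i=2$ and $d_i$ equal to the third-smallest of its five keys, and a later minibatch delivers a sixth element of this stratum and nothing else. With probability $1/2$ the new key falls below $d_i$, the element is admitted, and if $\sssr$ then evicts from some \emph{other} stratum, $s_i$ ends at $3$ with the new element retained; otherwise $s_i$ stays at $2$ and the new element can never enter $S_i$. Its inclusion probability is therefore $1/2$, while each older element's is $2/5$ --- so even equal inclusion probabilities, let alone joint uniformity conditional on $|S_i|$, require extra conditioning (e.g.\ on the allocation trajectory) or extra hypotheses to recover. For what it is worth, the paper's own proof does not confront this either: it establishes the same smallest-keys invariant and then asserts that ``each element has a chance of $|S_i|/n_i$ to be selected,'' which is precisely the point at issue once $s_i$ is itself a function of the keys. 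So your write-up correctly isolates where the real difficulty lies, but neither your exchangeability argument nor the paper's one-line assertion resolves it.
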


\begin{proof}
  First, note that each $S_i$ is selected from $\stream_i$ without replacement,
  because each element of $\stream_i$ is selected into $S_i$ no more than once.
  Next, we prove the uniformity of $S_i$. In case $|S_i|=n_i$, all
  elements of $\stream_i$ are in $S_i$. In case $|S_i| < n_i$, $S_i$
  contains the $|S_i|$ elements with the smallest keys from stratum $\stream_i$,
  because: (1)~Anytime an element is discarded from $S_i$, it is the
  element of the largest key in the sample. (2)~If another element
  with key $d$ enters later, it cannot be inserted into $S_i$ unless
  $d$ is smaller than or equal to  all other keys discarded so far.
  Because the keys of elements are assigned randomly, each element has
  a chance of $|S_i|/n_i$ to be selected into $S_i$. Therefore, $S_i$
  is a uniform random sample from $\stream_i$ without replacement.
\end{proof}

\begin{theorem}
\label{thm:batch-time}
If the minibatch size $b=1$, then the worst-case time cost of \svoila\ for processing an element is $O(r)$. 
The expected time for processing an element belonging to stratum $\alpha$ is $O(1 + r \cdot s_\alpha/n_\alpha)$, 
which is $O(1)$ when $r\cdot s_\alpha = O(n_\alpha)$.
If $b> 1$, then the worst-case time cost of \svoila\ for processing a minibatch is $O(r\log r+b)$. 
\end{theorem}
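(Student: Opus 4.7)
The plan is to dissect Algorithm~\ref{algo:ssna-batch} line by line, separating per-element bookkeeping from the sample-size-reduction step, and handle the three claims in turn.

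For the worst-case with $b=1$, the for-loop executes once: classifying the element, maintaining $n_\alpha$ and $\sigma_\alpha$ via standard running-update formulas, generating a key, and testing $d\leq d_\alpha$ are all $O(1)$. If the element is not selected we are done in $O(1)$; if it is selected, we call $\sssr$, which costs $O(r)$ by the remark after Lemma~\ref{lem:one-ev}, then delete a single maximum-key element from $S_\ell$ and record $d_\ell$. Assuming each $S_i$ is kept in a per-stratum structure supporting $O(1)$ find/remove of the maximum-key element (and noting that when exactly one element is evicted, $d_\ell$ just equals the key of that element), this branch runs in $O(r)$ overall.

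For the expected time with $b=1$, I would argue probabilistically over the random key $d\in(0,1)$ assigned to the incoming element. By Lemma~\ref{lem:svoila-uniform} together with the key-based invariant that $S_\alpha$ holds the $s_\alpha$ smallest keys among the $n_\alpha$ elements of $\stream_\alpha$, the threshold $d_\alpha$ is the $(s_\alpha+1)$-th order statistic of $n_\alpha$ i.i.d.\ uniform keys, whose expectation is $\Theta(s_\alpha/n_\alpha)$. Hence the element is selected with probability $O(s_\alpha/n_\alpha)$, and the expected time is $O(1) + O(s_\alpha/n_\alpha)\cdot O(r) = O(1 + r\cdot s_\alpha/n_\alpha)$, from which the $O(1)$ bound under $r\cdot s_\alpha = O(n_\alpha)$ is immediate.

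For the minibatch case $b>1$, the first loop at Line~\ref{line:for} contributes $O(b)$ for constant work per element. The call to $\fssr(M)$ costs $O(r\log r)$ by Theorem~\ref{thm:fssr}. The subsequent eviction loop incurs an $O(r)$ overhead from scanning all strata, and its actual deletion work is bounded by $\sum_i \max(0, s_i - \mathcal{L}[i]) \leq \beta \leq b$, costing $O(b)$ with an appropriate per-stratum max-oriented structure (the $d_i$ updates piggyback on the last deletion in each stratum). Summing these terms yields $O(r\log r + b)$. The main obstacle is the expected-time claim, since it requires justifying the selection probability $\Theta(s_\alpha/n_\alpha)$ through the key-based uniformity invariant rather than a naive reservoir-sampling argument; the remaining bounds reduce to careful bookkeeping plus the costs of $\sssr$ and $\fssr$ already established earlier.
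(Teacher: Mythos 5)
Your proposal is correct and follows essentially the same decomposition as the paper's proof: the $O(r)$ worst case comes from \sssr{} being triggered only when the element is selected, the expected bound multiplies that cost by the selection probability $O(s_\alpha/n_\alpha)$, and the minibatch bound is $O(b)$ for the scan plus $O(r\log r + \beta)$ for \fssr{} and the evictions. You fill in some details the paper leaves implicit (the order-statistic derivation of the selection probability and the need for a max-oriented per-stratum structure), but the argument is the same.
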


\begin{proof}
  $b=1$: The worst case happens when the single new element
  from belonging to stratum $\alpha$ gets selected into $S_\alpha$. In
  that case, we need to reduce the \srs\ size by one via \sssr, which
  takes $O(r)$ time. The probability that the new element is selected
  into $S_\alpha$ is equal to $s_\alpha/n_\alpha$, so the expected
  time follows.

 $b> 1$:  The time cost for Lines~\ref{line:for}--\ref{line:beta} is
  $O(b)$. The time cost for
  Lines~\ref{line:svoila-evict-if-start}--\ref{line:svoila-evict-else-end}
  is $O(r\log r +\beta)$. So the total time cost is
  $O(b) + O(r\log r + \beta) = O(r\log r + b)$.
  The per-element amortized time cost is $O(1)$ when $b = \Omega(r\log r)$
\end{proof}

\noindent
We can expect \svoila\ to have an amortized per-item processing time of $O(1)$ in many circumstances.

When $b=1$: After observing enough stream elements
from stratum $\alpha$, such that $r\cdot s_\alpha = O(n_\alpha)$, 
the expected processing time of an element becomes
$O(1)$. Even if certain strata have a very low frequency, the
expected time cost for processing  a single element 
is still expected to be $O(1)$, because elements from an infrequent
stratum  $\alpha$ are unlikely to appear in the minibatch.  

When $b> 1$: The per-element amortized time cost of \svoila\ 
is $O(1)$, when the minibatch size $b=\Omega(r\log r)$. 

\section{Experimental Evaluation}
\label{sec:eval}
The algorithms are evaluated on real-world data as well as synthetic data. The input is a stored set or a continuous stream of records from a data source, which is processed by the sampler which either outputs the sample at the end of computation (offline sampler) or continuously maintains a sample (streaming sampler). A streaming sampler must process data in a single pass, and is unable to access elements that were observed earlier, unless they are stored in memory. An offline sampler has access to all data received, and can compute a stratified random sample using multiple passes through data. 

We evaluate the algorithms in two ways. The first is a direct evaluation of the quality of the samples, through the resulting allocation and the variance of an estimate of the population mean obtained using the samples. The second is through the accuracy of approximate query processing using the maintained samples.


\subsection{Sampling Methods}
We implemented three offline sampling methods, each of which uses two passes to compute a stratified random sample. 
The first pass is to determine strata characteristics from which the sample size of each stratum is derived, and the second pass is to collect the samples. Each method is given the same total memory of $M$ records. 
We implemented \voila as described in the paper, and \neyman, that uses Neyman allocation.
As explained earlier, \neyman will lead to a bounded stratum being allocated a greater sample size than the data within the stratum. 
This leaves some portion of the total memory unused by \neyman.
To improve upon this, we implemented an extended version of \neyman called \neymanPlus,
which uses the entire memory allocation. \neymanPlus first runs \neyman. Any unused memory is allocated equally 
among the remaining (non-bounded) strata. This may lead to more strata becoming bounded, 
and the process is continued recursively, until all the memory is used up.


We implemented the following stream sampling methods: \svoila with different minibatch sizes, reservoir sampling \reservoir, and \ssunif\ -- SRS with uniform allocation. 
\reservoir maintains a uniform random sample chosen without replacement from the stream 
- we expect the number of samples allocated to stratum $i$ by \reservoir to be proportional to $n_i$. 
\ssunif allocates the same amount of memory to each stratum that has been observed. If a stratum has~too few
data points to fill its current allocation, then the remaining memory is allocated uniformly among other strata, and this memory
redistribution may happen further, recursively.

For all experiments on comparing sampling methods based on allocations or on variance, each data point is the mean of five independent runs.

\subsection{Data}
\begin{figure}[t]
	\centering
	\begin{subfigure}{\graphwidth}
		\includegraphics[width=\textwidth]{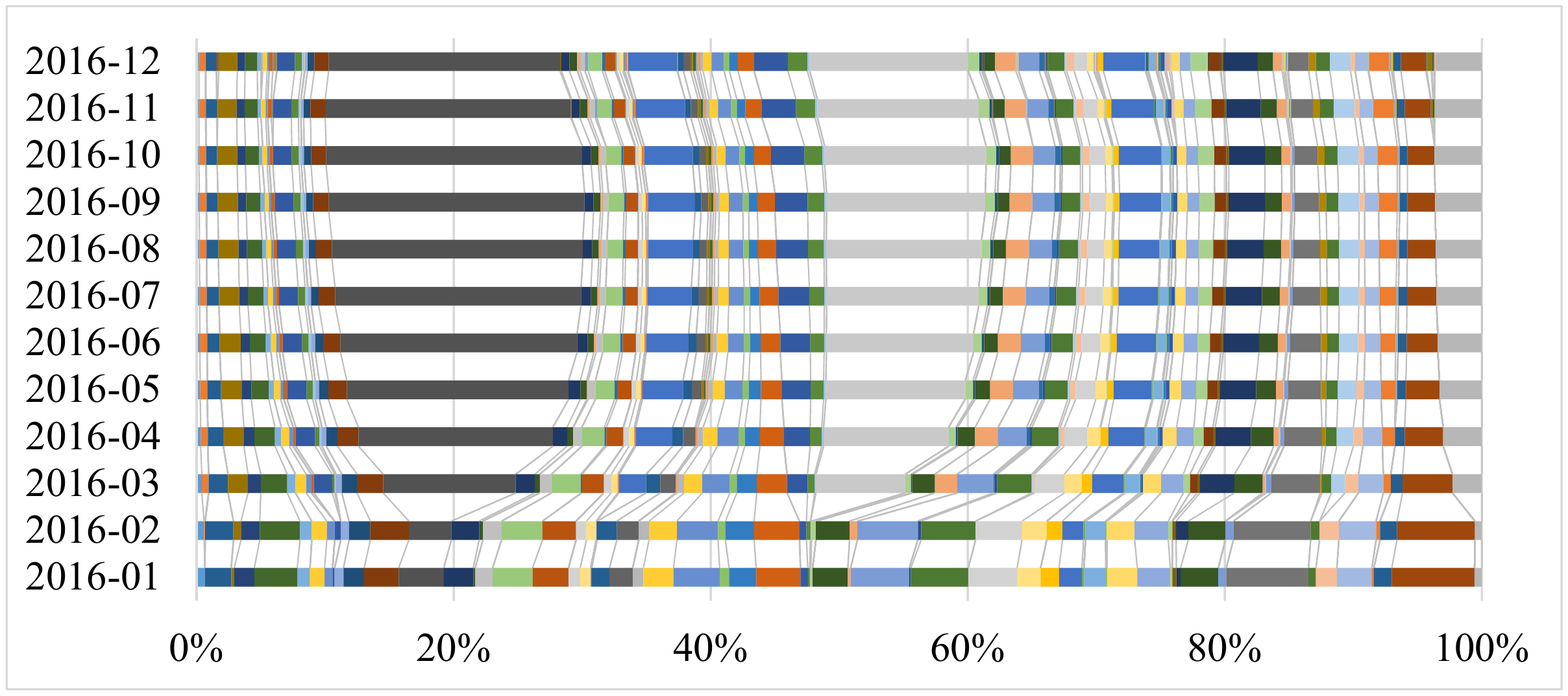}
		\caption{Relative frequencies of different strata. The x-axis is the fraction of points observed so far. At different points in time, the relative (cumulative) frequency of each stratum is shown.}
		\label{fig:data_freq}
	\end{subfigure}
	~
	\begin{subfigure}{\graphwidth}
		\includegraphics[width=\textwidth]{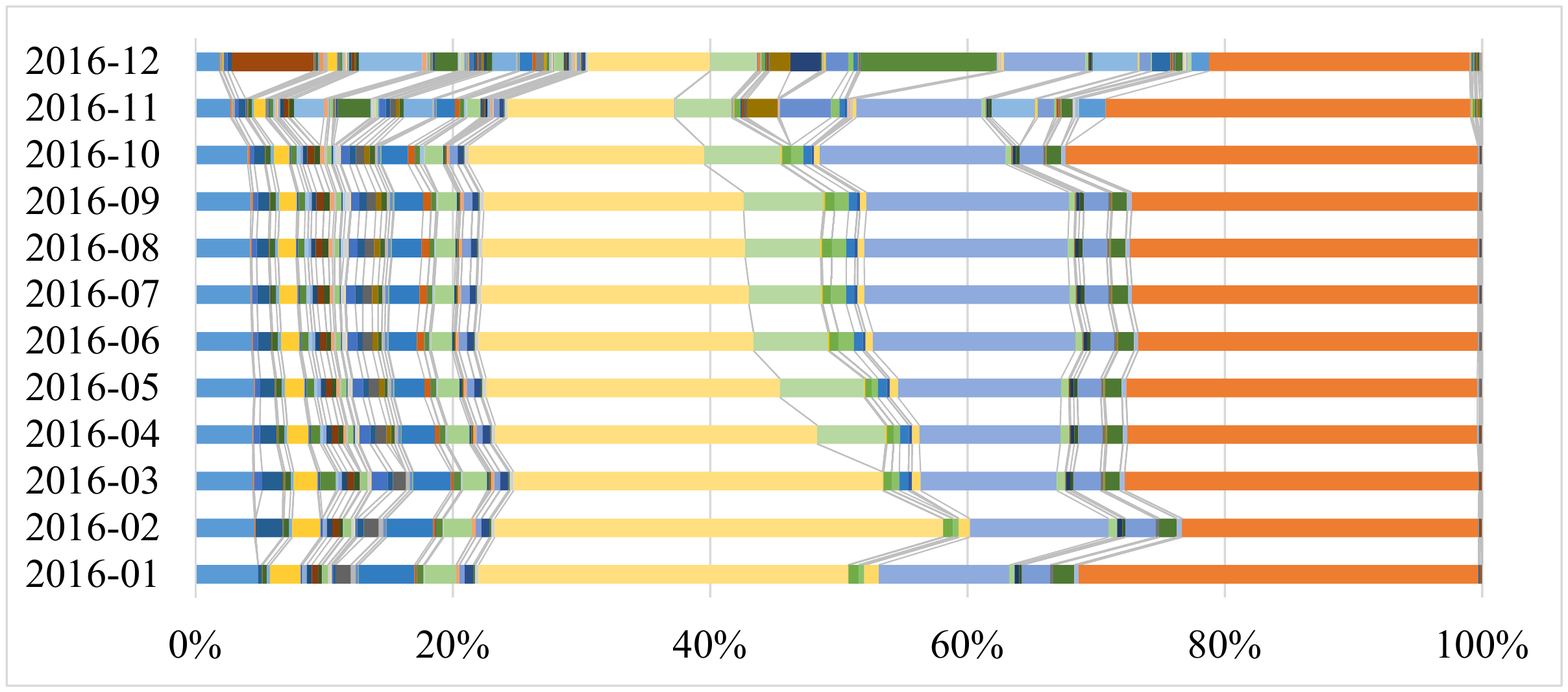}
		\caption{Relative standard deviations of different strata, demonstrated by normalized cumulative standard deviations observed by the end of each month.}
		\label{fig:data_stddev}
	\end{subfigure}
	~
	\begin{subfigure}{\graphwidth}
		\includegraphics[width=\textwidth]{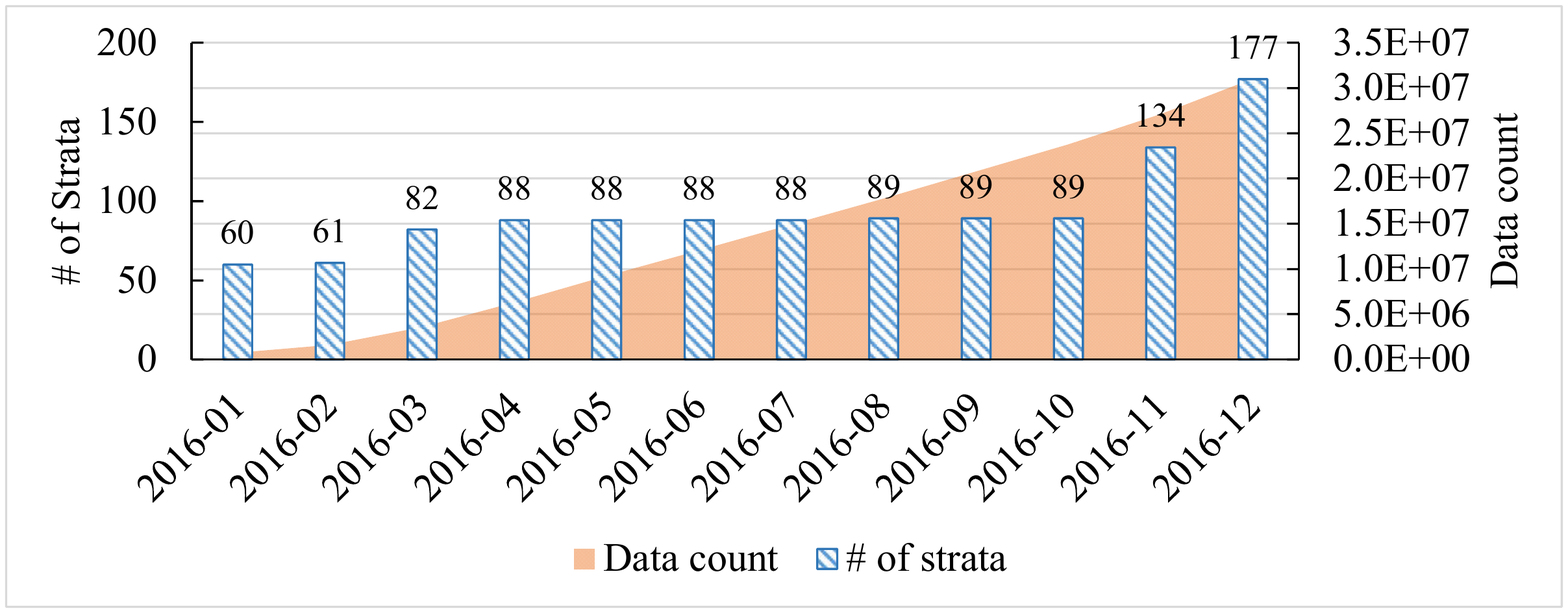}
		\caption{The number of strata seen so far, and the number of records in data, as a function of time. }
		\label{fig:num_stratum}
	\end{subfigure}
	\caption{Characteristics of the OpenAQ dataset.}	
	\label{fig:data_char}
\end{figure}
\begin{figure}[t]
	\centering
	\begin{subfigure}{\graphwidth}
		\includegraphics[width=\textwidth]{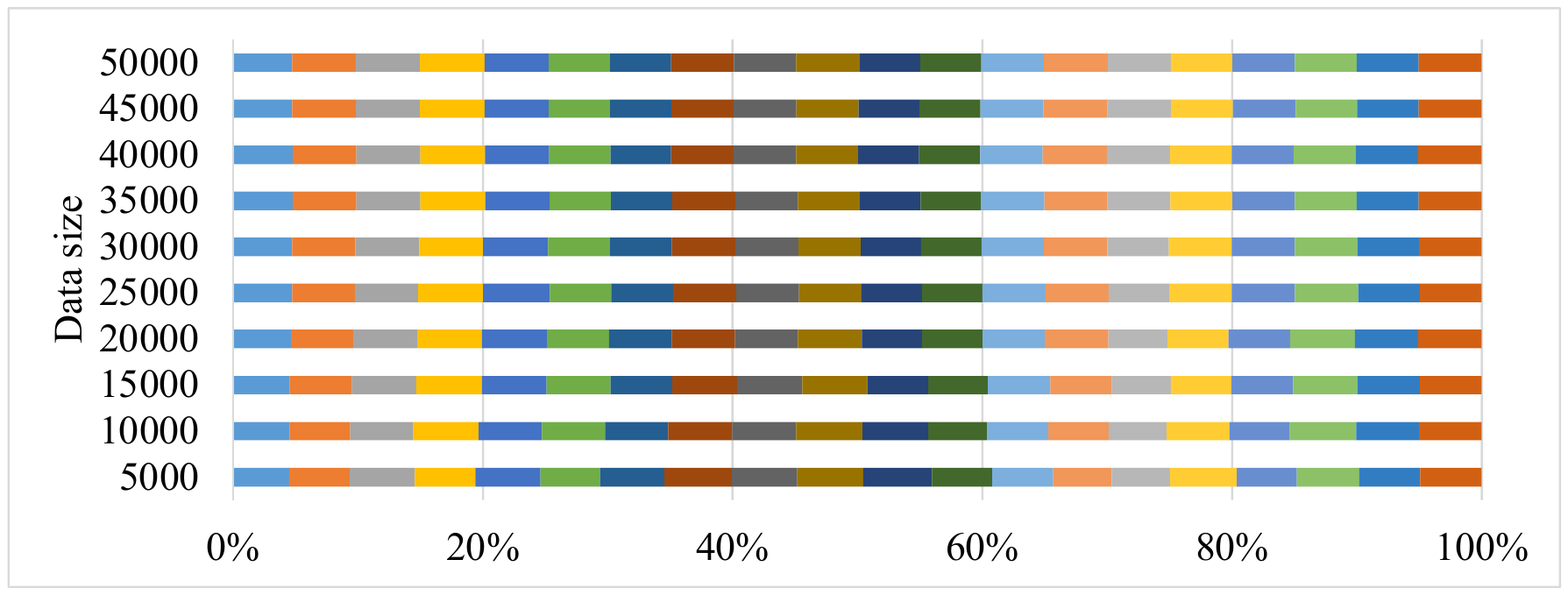}
		\caption{Relative Frequencies of Different Strata.}
		\label{fig:data2_freq}
	\end{subfigure}
	~
	\begin{subfigure}{\graphwidth}
		\includegraphics[width=\textwidth]{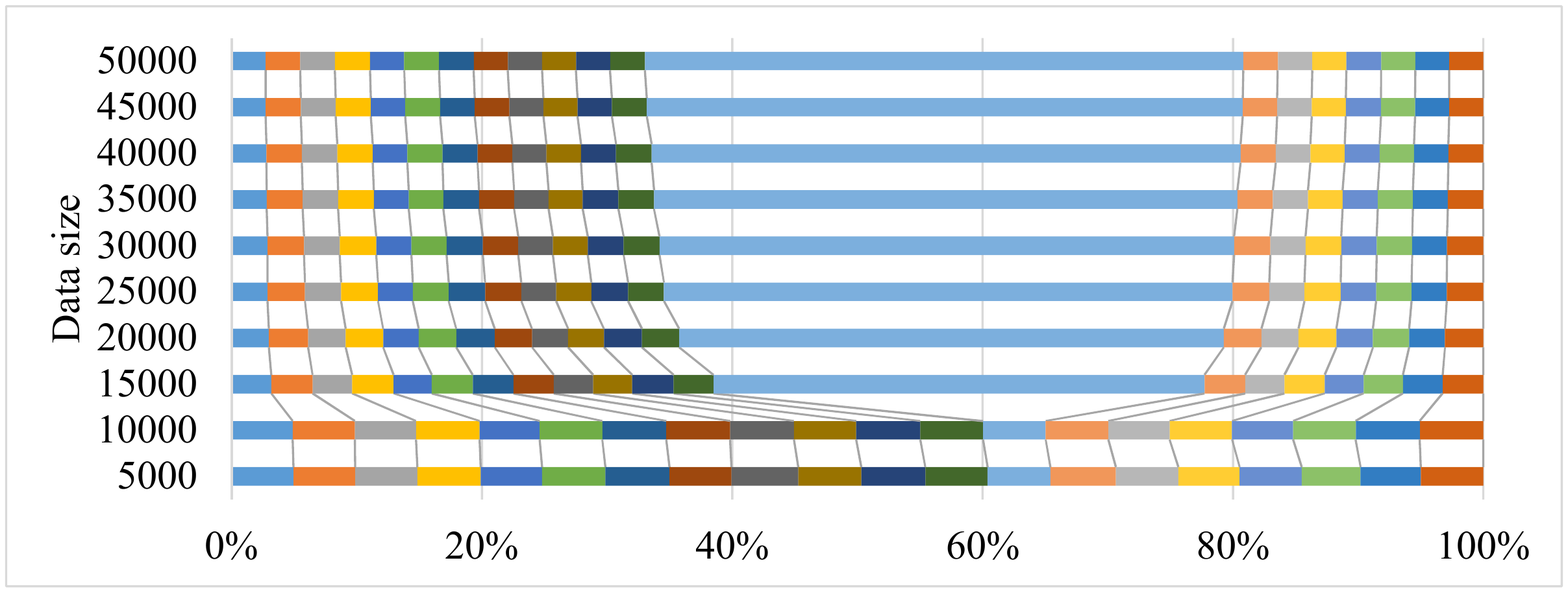}
		\caption{Relative Standard Deviations of Different Strata}
		\label{fig:data2_stddev}
	\end{subfigure}
	\label{fig:data2_char}
	\caption{The Change in Data Characteristics over time, for the Synthetic dataset.}	
\end{figure}

We used two datasets. The first is the OpenAQ dataset~\cite{OpenAQ}, which
contains more than $31$ million records of air quality measurements from
$7,923$ locations in $62$ countries around the world in 2016. The
measurements includes particulate matter (PM10 and PM2.5), sulfur
dioxide (SO$_2$), carbon monoxide (CO), nitrogen dioxide (NO$_2$),
ozone (O$_3$), and black carbon (BC). Data is replayed in time
order to generate the data stream and is stratified based on the
country of origin and the type of measurement, e.g., all measurements of
carbon monoxide in the USA belong to one stratum, all records of
sulphur dioxide in India belong to another stratum, and so on. The
total number of strata at the end of observation is 177, as shown in
Figure~\ref{fig:num_stratum}.

We note that each stratum begins with zero records, and in the initial
stages, each stratum is bounded. As more data are observed,
many of the strata are not bounded anymore, but it is still the case
that there are some strata with few observations, when compared
with other strata. Further, new strata are added 
as more sensors are incorporated into the data stream.
Figure~\ref{fig:num_stratum} shows that new strata are being added with time.
Figure~\ref{fig:data_freq} and \ref{fig:data_stddev} respectively show
the cumulative frequency and standard deviation of the data over time.
As seen, the relative frequency and relative standard deviation of
different strata change significantly. As a result, the
variance-optimal sample-size allocations to strata also change over
time, and the streaming algorithms need to adapt to these changes.

The characteristics of real data, including number and properties of strata are changing frequently and continuously, 
and the allocation is a result of the combined adaptation due to multiple changes.
In order to evaluate on data over which we have more control, 
we created a synthetic data source. Each record $i$ from this source is a tuple
$\left\langle {{sid},{val}} \right\rangle$ where $sid$ is the id of
the stratum the record belongs to, $val$ is the value. The number of strata is set to $20$. Frequencies are equal between
strata, \ie at any time, each stratum has approximately same amount of
records. For a stratum $j$, the value of each record is drawn at
random from Gaussian distribution with two parameters mean $\mu_j = 1$
and standard deviation $\sigma_j$. For the first $10,000$
records, we set $\sigma_j = 1$ for all the strata. After that, we
change the standard deviation of stratum 12 by setting $\sigma_{12} = 20$, while keeping the other strata fixed.
Figures \ref{fig:data2_freq} and \ref{fig:data2_stddev} show the
relative frequencies and standard deviations of the synthetic dataset
over time. While the frequencies are stable, the accumulated standard
deviation shows how stratum 12 changes. 

\subsection{Allocations to Different Strata}
We measured the allocation of samples to different strata. 
Unless otherwise specified, the sample size $M$ is set to 1 million records. 
The allocation can be seen as a vector of~numbers that sum up to $M$
(or equivalently, normalized to sum up to $1$), and we
observe how this vector changes as more elements arrive.
Figure~\ref{fig:alloc_combined} shows the allocations at a single
point in time, at the end of September 2016, for OpenAQ data. From this
figure, we see that the allocation of the streaming sampler \svoila tracks that of the
variance-optimal offline sampler \voila quite closely. As expected,
\reservoir's allocation is proportional to the volume of the stratum,
while \ssunif's allocation is the same across all strata.

\begin{figure}[t]
	\centering
	\includegraphics[width=\graphwidth]{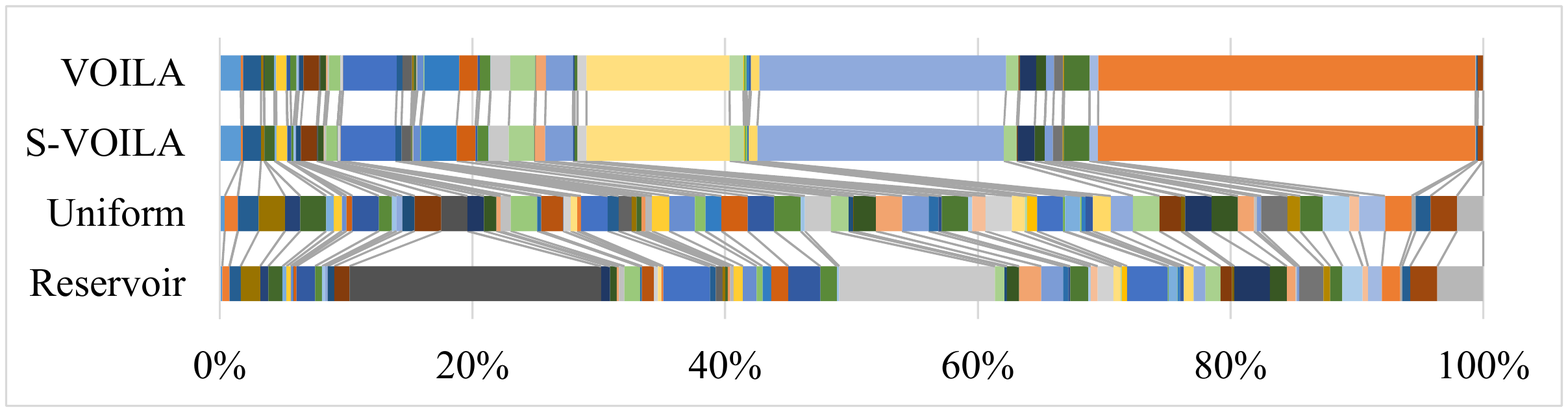}
	\vspace{-0.1in}
	\caption{Allocation of sample sizes among strata after 9 months, OpenAQ data}
	\label{fig:alloc_combined}
\end{figure}

\begin{figure*}[t]
	\centering
	\begin{subfigure}[t]{0.32\textwidth}
		\includegraphics[width=\textwidth]{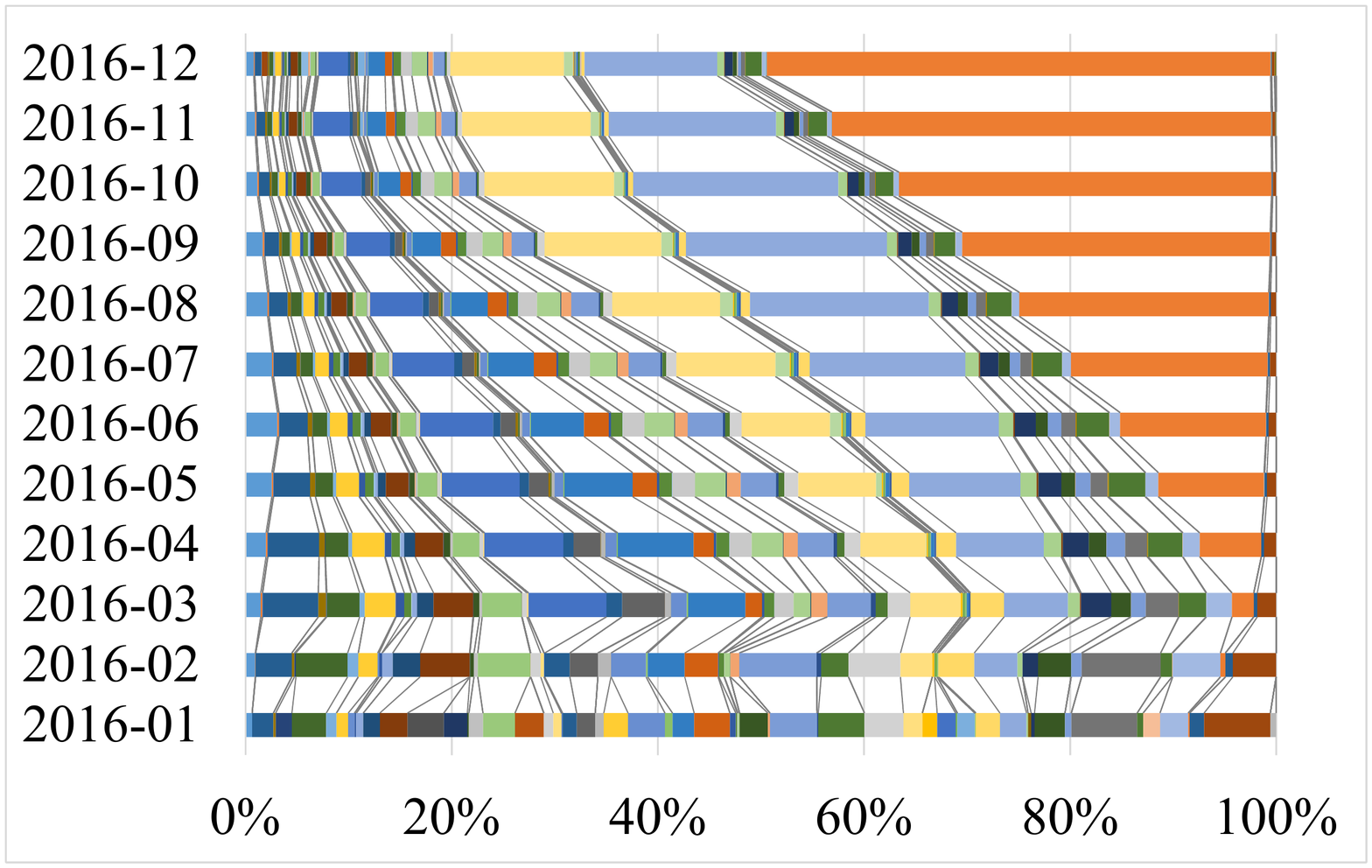}
		\caption{Allocation due to \voila}
	\label{fig:alloc_neyman_offline}
	\end{subfigure}\hfill%
	\begin{subfigure}[t]{0.32\textwidth}
		\includegraphics[width=\textwidth]{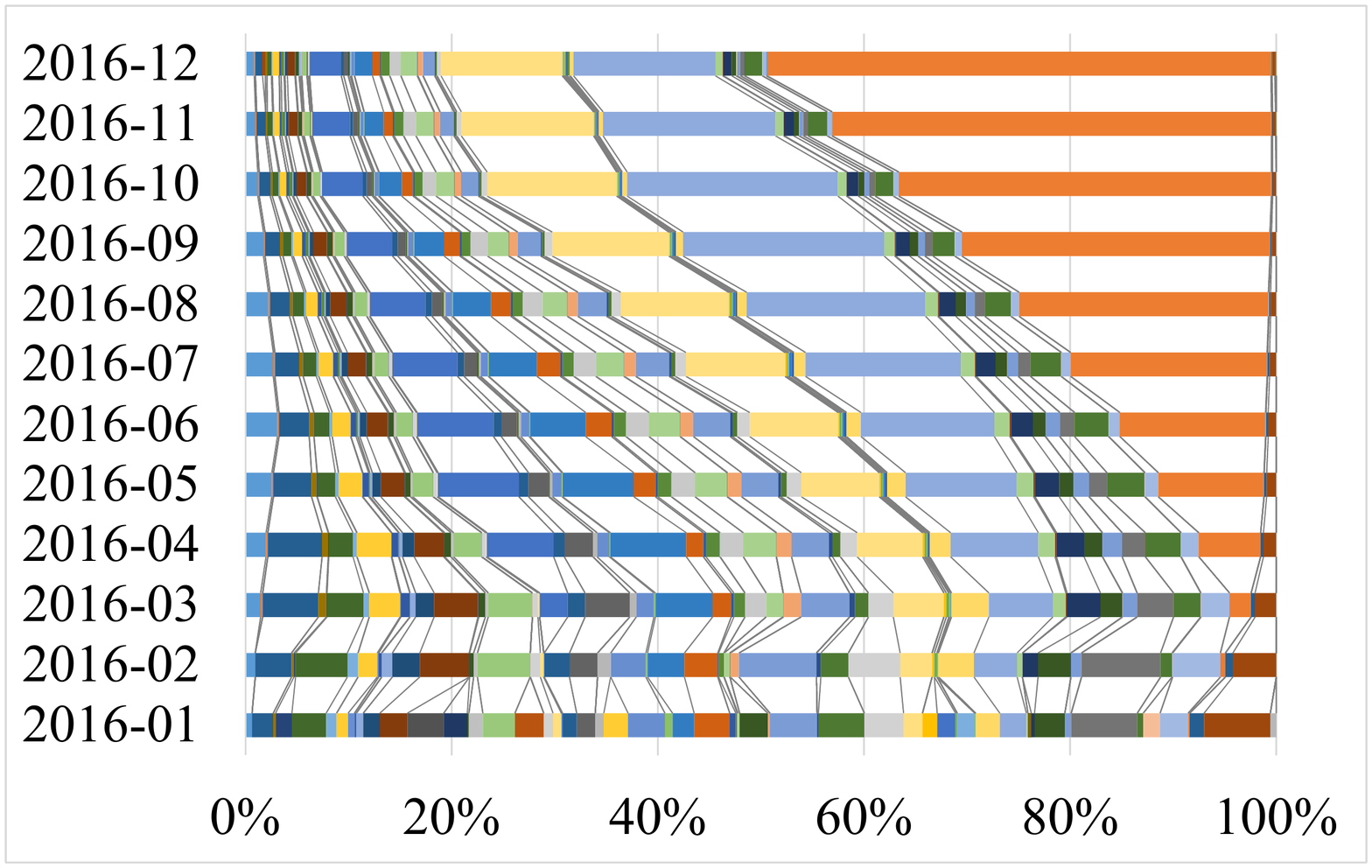}
		\caption{Allocation due to \svoila with Single Element Processing}
		\label{fig:alloc_neyman_online}
	\end{subfigure}\hfill%
	\begin{subfigure}[t]{0.33\textwidth}
		\includegraphics[width=\textwidth]{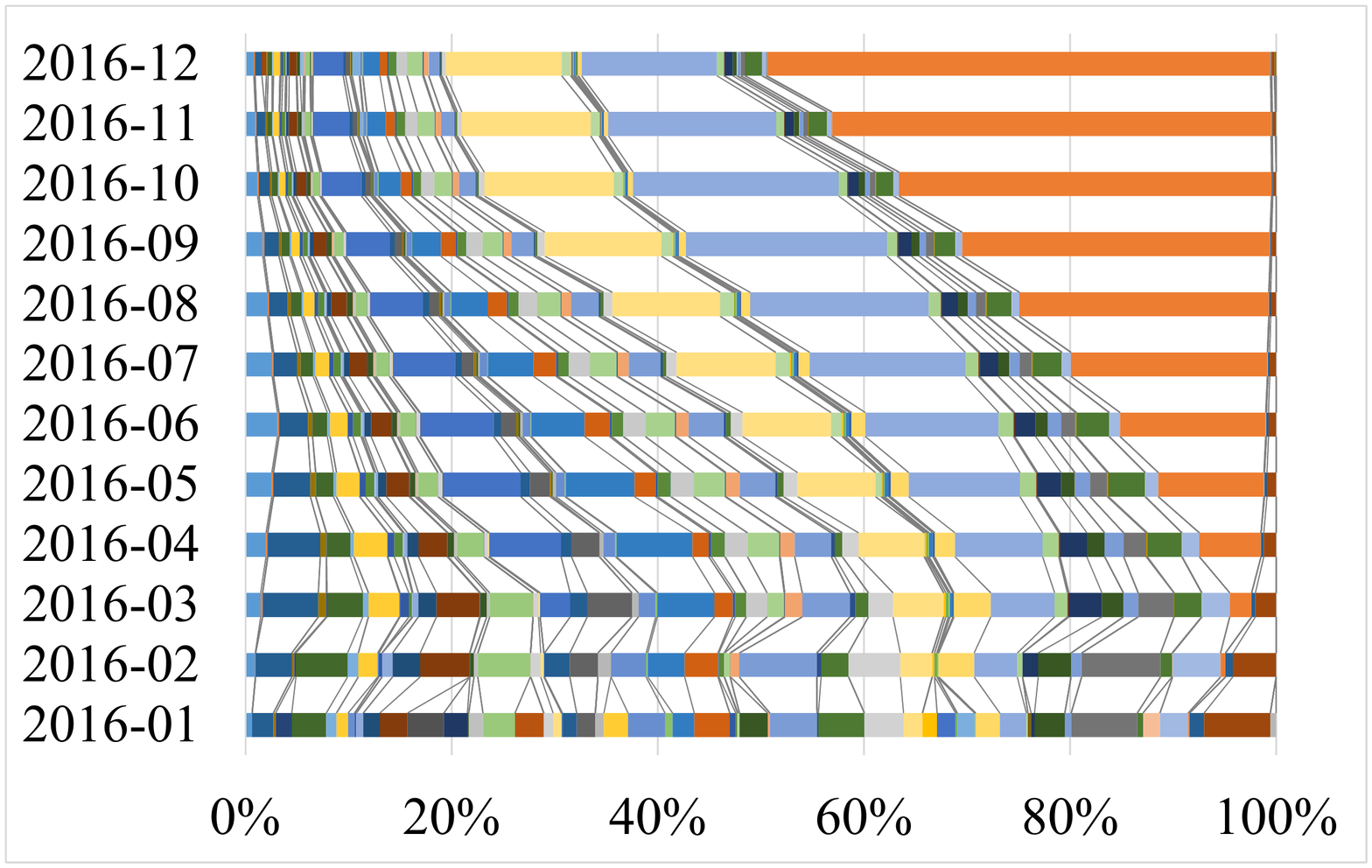}
		\caption{Allocation due to \svoila with MiniBatch
                  Processing (batch size = one-day of data)}
		\label{fig:alloc_neyman_batch}
	\end{subfigure}
	\label{fig:allocation}
	\caption{Change in Allocation over time, OpenAQ data}	
\end{figure*}


\begin{figure}[t]
	\centering
	\includegraphics[width=\graphwidth]{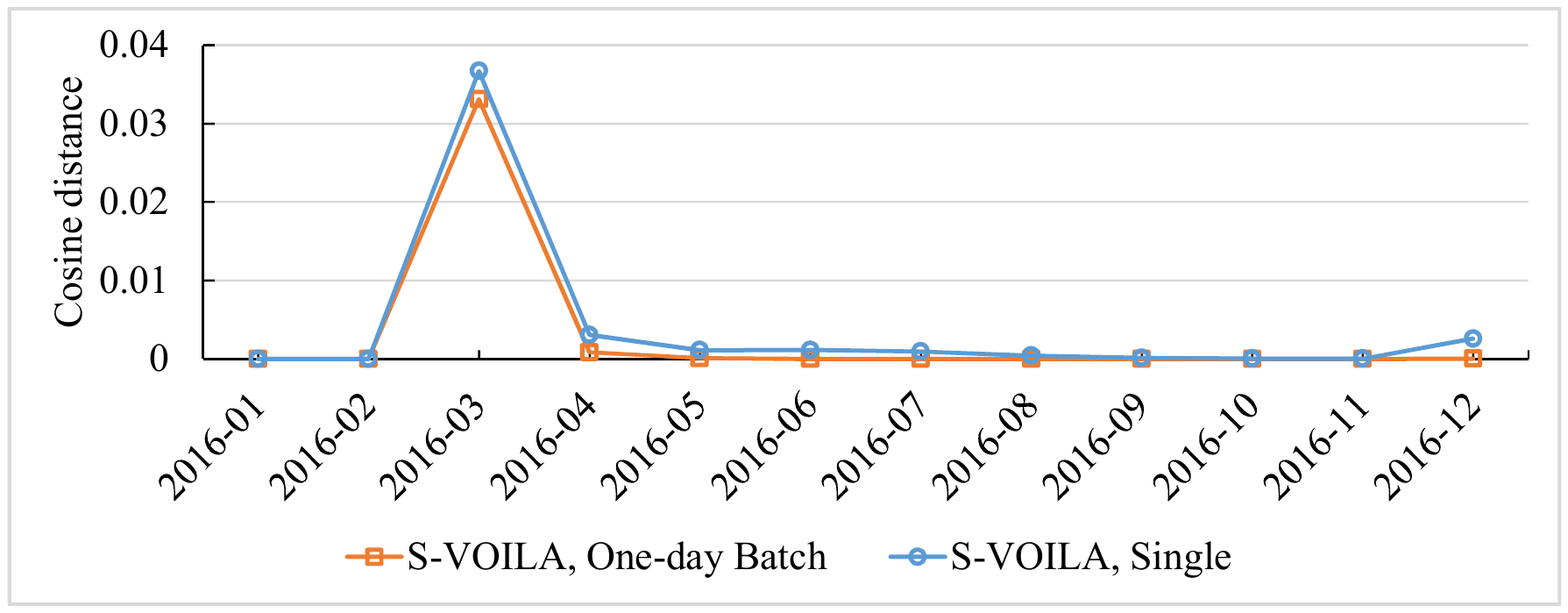}
	\vspace{-0.1in}
	\caption{Cosine distance between the allocations due to \voila,
          \svoila with single element processing, and \svoila with minibatch processing, OpenAQ data.}
	\label{fig:dist_cosine}
\end{figure}


Figures \ref{fig:alloc_neyman_offline}, \ref{fig:alloc_neyman_online}
and \ref{fig:alloc_neyman_batch} show the change in allocations over time
resulting from \voila, \svoila with single element processing, and
\svoila with minibatch processing (minibatch size = 1 day's data). 
Visually, the allocations produced by the three
methods track each other over time, showing that the streaming methods
follow the allocation of the optimal offline algorithm, \voila. To
understand the difference between the allocations due to \voila and
\svoila quantitatively, we measure the cosine distance between the
allocation vectors from \voila and \svoila. The results show that allocation vectors due to \svoila and \voila
are very similar, since the cosine distance is close to $0$ most of the
time and less than $0.04$ at all times. We further note that \svoila with minibatch
processing yields an allocation that is closer to \voila than \svoila
with single element processing.


\begin{figure}[t]
  \centering
  \includegraphics[width=\graphwidth]{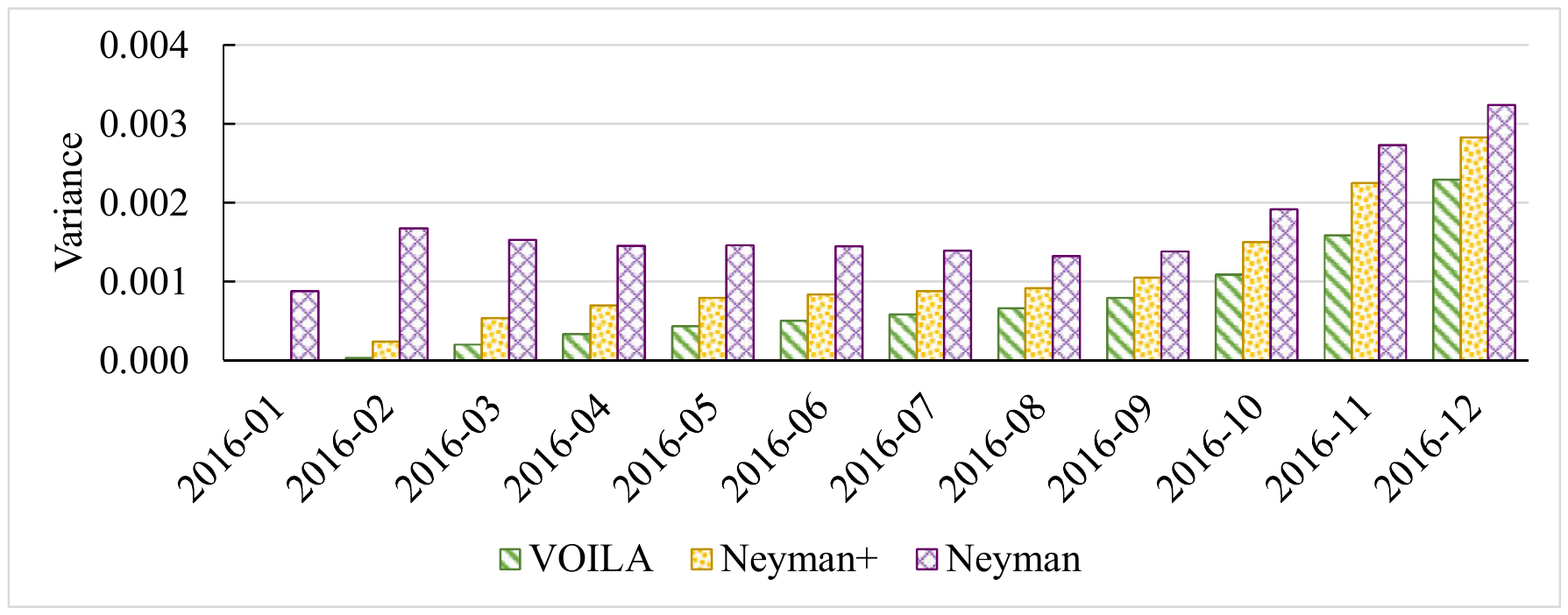}
  \caption{Variance of \voila compares to \neyman and \neymanPlus with equal sample size:f 1M records, OpenAQ data.} \label{fig:alloc_variance_2} 
\end{figure}
\begin{figure}[t]
  \centering
  \includegraphics[width=\graphwidth]{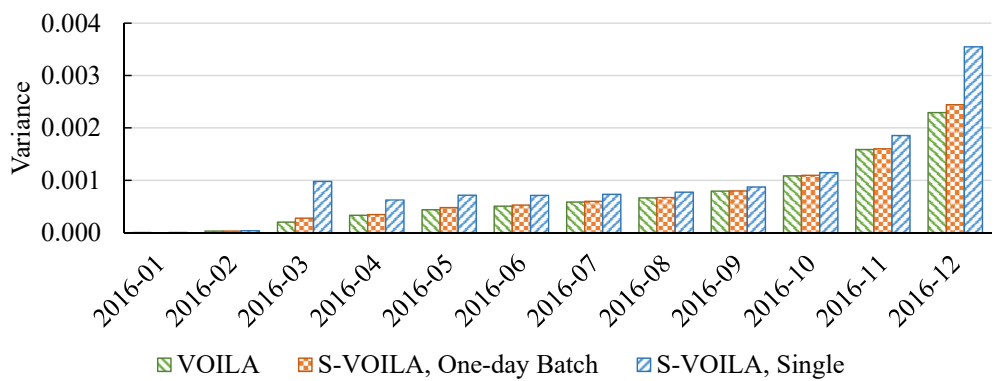}
  \caption{Variance of streaming \svoila, with Single and
    Minibatch Processing, compared with offline \voila. Sample size
    is set to 1M records, for each method, OpenAQ data.}
  \label{fig:alloc_variance}
\end{figure}

\begin{figure}[t]
  \centering
  \includegraphics[width=\graphwidth]{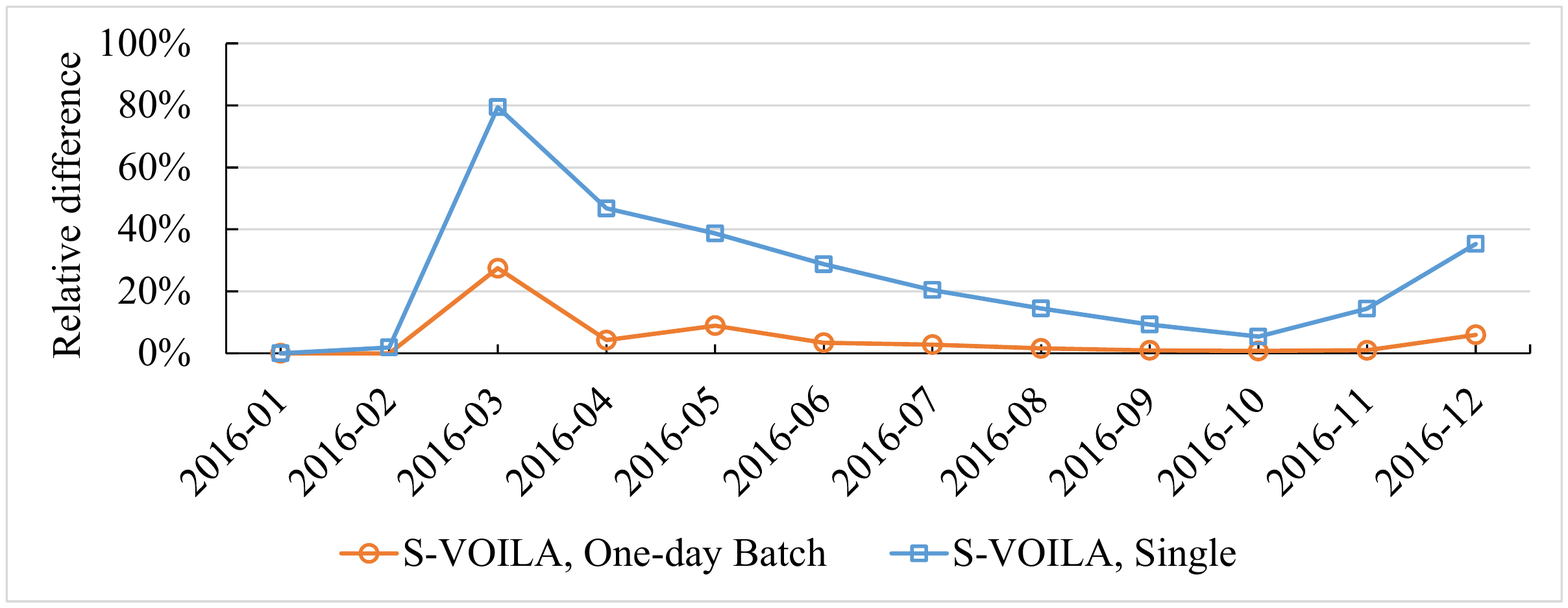}
  \caption{Relative difference of the variance of \svoila, with
    Single and Minibatch Processing, compared with the optimal
    variance due to \voila, OpenAQ data.}
  \label{fig:dist_variance}
\end{figure}

\subsection{Comparison of Variance} 
We compared the variance of the estimates (Equation~\ref{eqn:var}) from the stratified
random samples produced by different methods, offline or streaming. 
The results are shown in Figures~\ref{fig:alloc_variance_2},~\ref{fig:alloc_variance} and~\ref{fig:dist_variance}.
Generally, the variance of the sample due to each method increases over time, since the volume of data as well as the
number of strata increase, while the sample size is fixed.

Among offline algorithms, we observe from Figure~\ref{fig:alloc_variance_2} 
that \neyman results in a variance that is larger than \voila, by a factor of 1.4x to 50x. While \neyman is known to be
variance-optimal for unbounded strata, these results show that it is far from variance-optimal for bounded strata. 
\voila is better than \neyman in two respects: (1)~it uses all available memory, and (2)~it allocates 
memory among strata in an optimal fashion. 
In order to measure the impact of the allocation, we compared the variance of
\voila with that of \neymanPlus, which uses all available memory.
From Figure~\ref{fig:alloc_variance_2} we observe the following. First, \voila always has a lower variance than
\neymanPlus and \neyman -- note this also implies that at point in time, there are bounded strata in the OpenAQ data, since 
otherwise, \neyman would also result in optimal variance. Second, the variance due to \voila is always smaller than
the variance due to \neymanPlus, by a factor of 1.2x to 7.1x. This shows that carefully dealing with bounded
strata using \voila can lead to significantly better stratified random samples.



The comparison of the variance of streaming algorithms is shown in Figure~\ref{fig:alloc_variance_2}.
Among the streaming algorithms, we note that the variance due to \svoila with single element
processing and with minibatch processing are typically close to that of the optimal algorithm, \voila.
The variance  of \svoila using minibatch processing is very close to that of \voila, showing that it is nearly variance-optimal at
all times. The variance of \svoila with single element processing is typically worse than minibatch processing.

Figure~\ref{fig:dist_variance} shows the relative difference between
the variance produced by a streaming algorithm ($\hat{x}$) and the
optimal variance due to \voila ($x$), defined as
$\frac{\hat{x} - x}{x}$. We note that the variance of both variants of \svoila are nearly equal to that
of \voila until March, when they start increasing relative to \voila, and then converge back.

From analyzing the underlying data stream, we see that March is the
time when a number of new strata appear in the data (Figure~\ref{fig:num_stratum}), causing
significant changes in the optimal allocation of samples to strata
(this can also be seen in Figure~\ref{fig:dist_cosine} showing the
cosine distance between the allocations). An offline algorithm such as
\voila can resample more elements from a stratum, if necessary, since
it has access to all data from the stratum. However, a streaming
algorithm such as \svoila cannot do so and must wait for enough new
elements to arrive in these strata before it can ``catch up'' to the
allocation of \voila. Hence, \svoila with single element as well as
with minibatch processing start showing an increase in the variance at
such a point. When data becomes stable again, and more elements arrive,
the relative performance of \svoila improves. \svoila with minibatch
processing approaches the optimal variance faster than \svoila
with single element processing, which is as expected, since as the
size of the minibatch increases, better optimization decisions are
made with respect to which elements to exclude from the sample. In
November and December, new strata appear again, and the relative
performance is again affected. Overall, we note that \svoila with
minibatch processing produces variance that is significantly closer to
\voila than \svoila with single element processing.

\begin{figure}[t]
	\centering
	\includegraphics[width=\graphwidth]{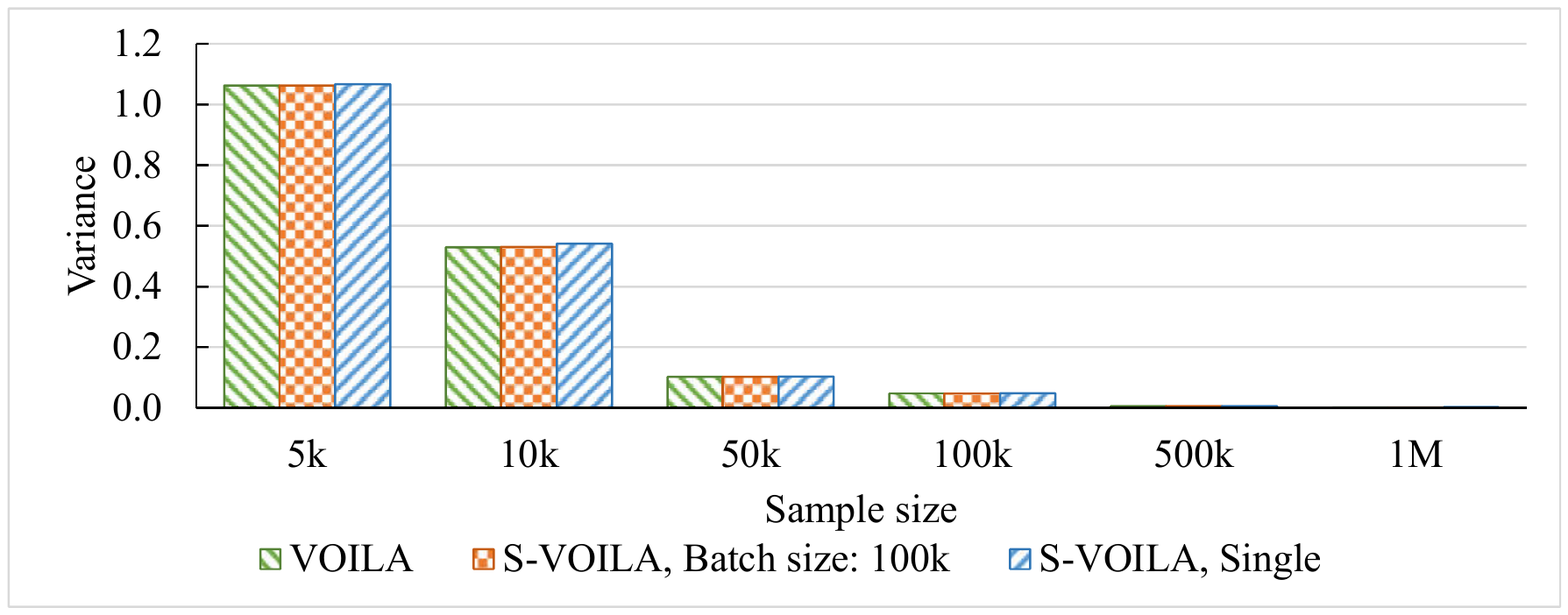}
	\caption{Impact of Sample Size on Variance, in September, OpenAQ data.}
	\label{fig:alloc_variance_sample_size}
\end{figure}

\begin{figure}[t]
	\centering
	\includegraphics[width=\graphwidth]{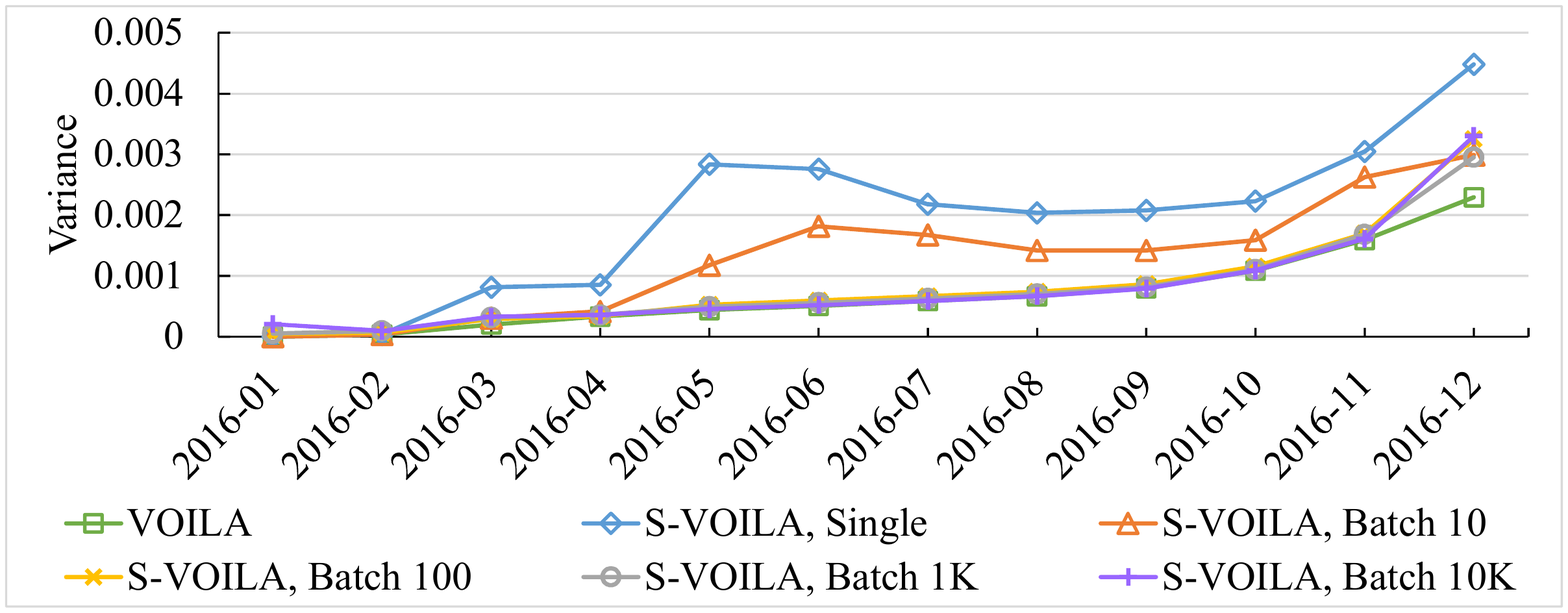}
	\caption{Impact of Batch Size on Variance, OpenAQ data.}
	\label{fig:alloc_variance_batch_size}
\end{figure}

{\bf Impact of Sample Size:} To understand the sensitivity to the size of the sample, we conducted an experiment where the sample size is varied from 5000 to 1 million records. We fixed the batch size to 100 thousand records. Figure \ref{fig:alloc_variance_sample_size} shows the snapshot in September 2016 of variances as a function of the sample size. Both \voila and \svoila, with single~element and minibatch processing, the variance decreases when the sample size increases. This is as expected, since larger samples produces better estimates of the population mean.

{\bf Impact of Batch Size:} It is clear from Figure~\ref{fig:alloc_variance} that the variance of minibatch \svoila, where each batch contains data collected in a day, is significantly smaller than that of single element \svoila. In order to better understand the impact of the batch size, we conducted an experiment where we tried different batch sizes for minibatch streaming \svoila, chosen from $\{1,10,{10}^2, {10}^3, {10}^4\}$. The results are shown in Figure~\ref{fig:alloc_variance_batch_size}. A batch size of 10 elements yields significantly better results than single element \svoila. A batch size of 100 or greater makes the variance of \svoila nearly equal to the optimal variance.

\subsection{Query Performance}
We now evaluate the quality of these samples indirectly, through their
use in approximate query processing, which is one of the major
applications of sampling. The streaming sampler continuously maintains a stratified random
sample of data (stored in memory), and use this sample to
approximately answer aggregate queries, which are issued by the
client. The offline sampler constructs its sample when needed, using
\voila, which takes two passes through the data. For evaluating the
approximation error in query processing, we also implement an exact
method for query processing, $\exact$, that stores every record in a
table (stored in a MySQL database~\cite{mySql}) and answers a query
using this table. While the exact method has zero error, its
processing time is high, and so is its space overhead. Identical
queries are made at the same time points in the stream to the
different streaming and offline samplers, as well as to the exact
query processor.

\begin{figure}[t]
	\centering
	\includegraphics[width=\graphwidth]{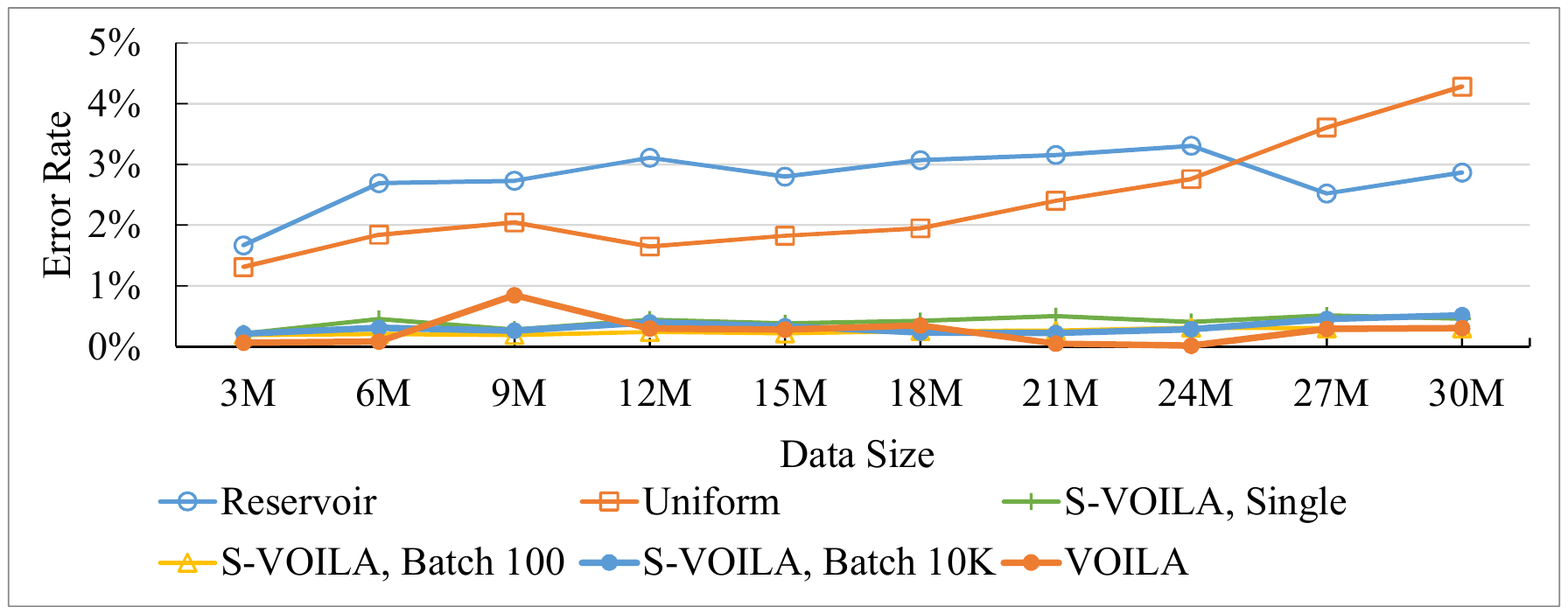}
	\caption{Query Performance as data size varies, with sample size fixed at 100,000. OpenAQ data.}
	\label{fig:dataSize}
\end{figure}

\begin{figure}[t]
	\centering
	\includegraphics[width=\graphwidth]{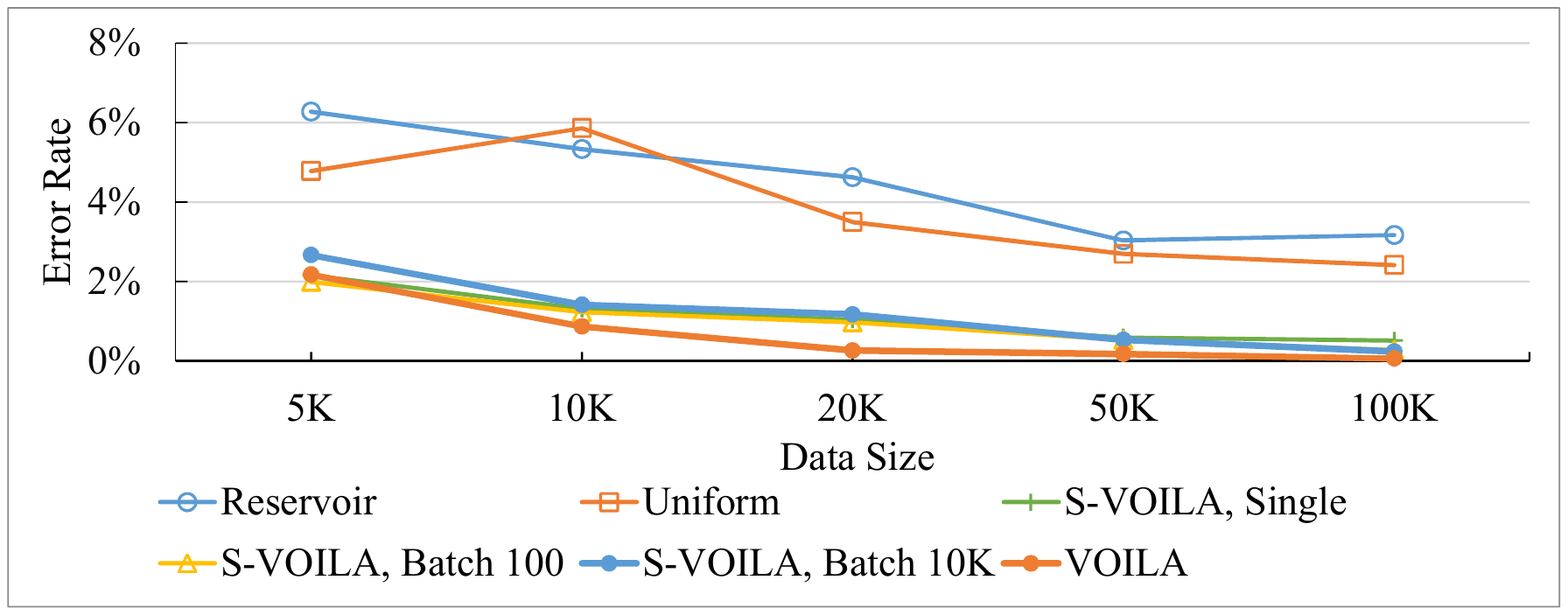}
	\caption{Query Performance as sample size varies, with data size fixed at 21 million. OpenAQ data.}
	\label{fig:sampleSize}
\end{figure}


We measure the accuracy in query processing of the following samplers: \reservoir,
\ssunif, \svoila, \vob, and \exact. We use the metric of relative
error between the approximate answer and the exact answer, where the
query asks for the mean of the data received across all strata. The
sample size is set to $100,000$ for all samplers. For \svoila, we set
minibatch size to be $1$, $100$, and $10,000$. Each data point is the
mean of nine repetitions of the experiment with the same
configuration.

Figure~\ref{fig:dataSize} shows the relative error as the size of the
streaming data increases, while the sample size is held fixed. The
query was executed every three million element arrivals, up to thirty
million, which covers the entire year of 2016 in the OpenAQ dataset.
We note that the relative performance between different methods
remains similar for most data sizes. \reservoir has a consistent
errors since it is mainly affected by sample size rather than data
size. \ssunif is affected by total number of strata and as expected,
we see an increasing error when the data size reaches 24 million,
where the number of strata increases suddenly as shown in
Figure~\ref{fig:num_stratum}, November 2016. The performance of \vob
and \svoila increase slightly with data size, though at much lower
rates than \reservoir and \ssunif. We note that \svoila with any
minibatch size is very close to \vob.

Figure~\ref{fig:sampleSize} shows the impact of the sample size, as it
varies from 5,000 to 100,000, and the queries were executed at a fixed
time of stream to see how sample size would affect the accuracy of
answering queries. As expected, all methods benefit from increased
sample size. We observed \svoila and \vob perform significantly better
than \reservoir and \ssunif even with smaller sample sizes. Another
observation of \svoila is that a larger minibatch size does not always
guarantee better accuracy. When total sample size is small, each
stratum is allocated with a smaller space and there are fewer bounded
strata. Therefore, the eviction made by single and minibatch
processing affected the performance less. With our configuration,
\svoila with minibatch ten thousand elements did not yield a better
accuracy until sample size was set to one hundred thousand.

\subsection{Adapting to a Change in Data Distribution}
\label{sec:syn}


\begin{figure*}[th]
	\centering
	\begin{subfigure}{0.33\textwidth}
		\includegraphics[width=\textwidth]{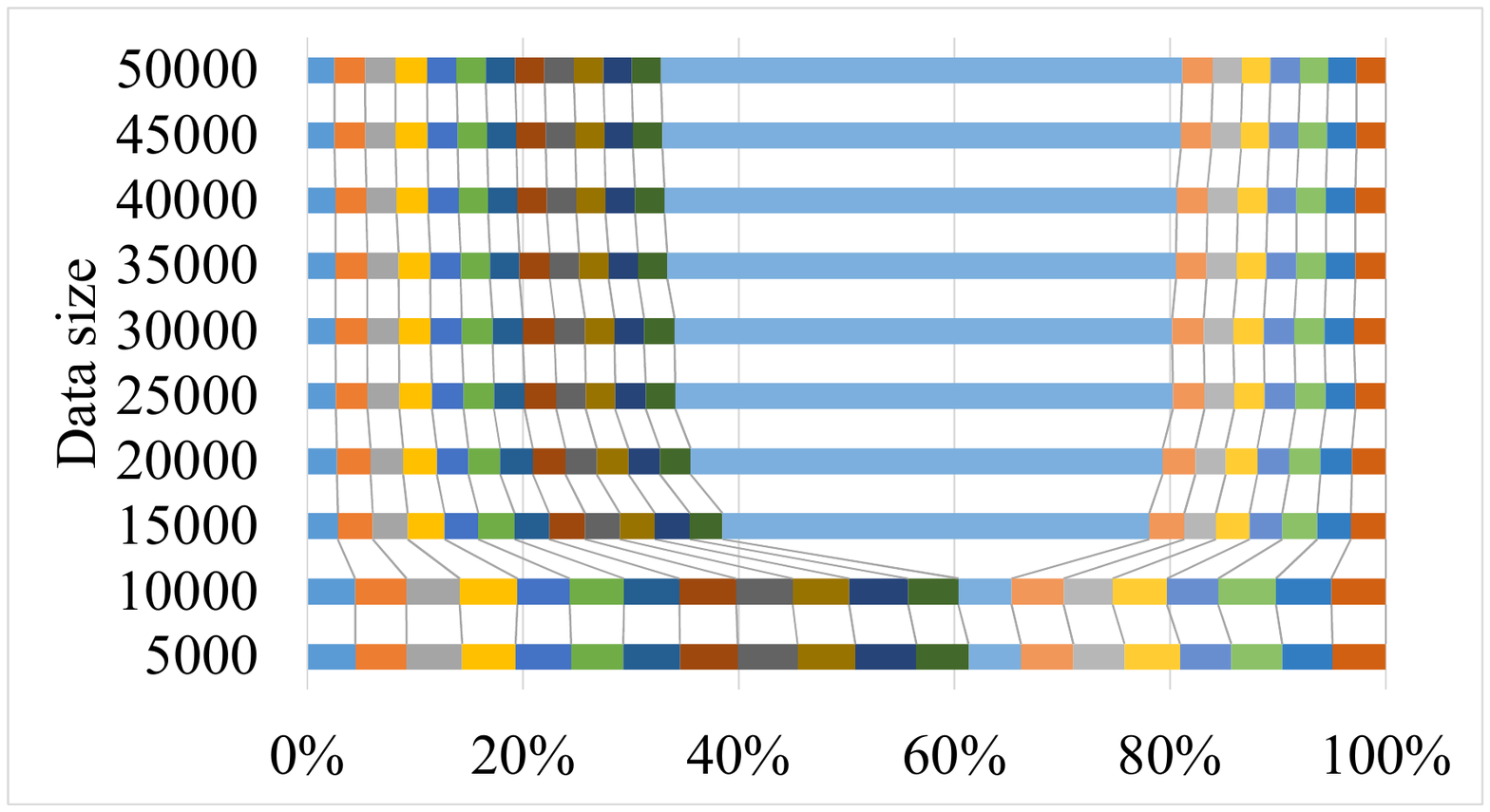}
		\caption{Allocation due to \vob~ across different strata.}
		\label{fig:2alloc_neyman_offline}
	\end{subfigure}\hfill%
	\begin{subfigure}{0.33\textwidth}
		\includegraphics[width=\textwidth]{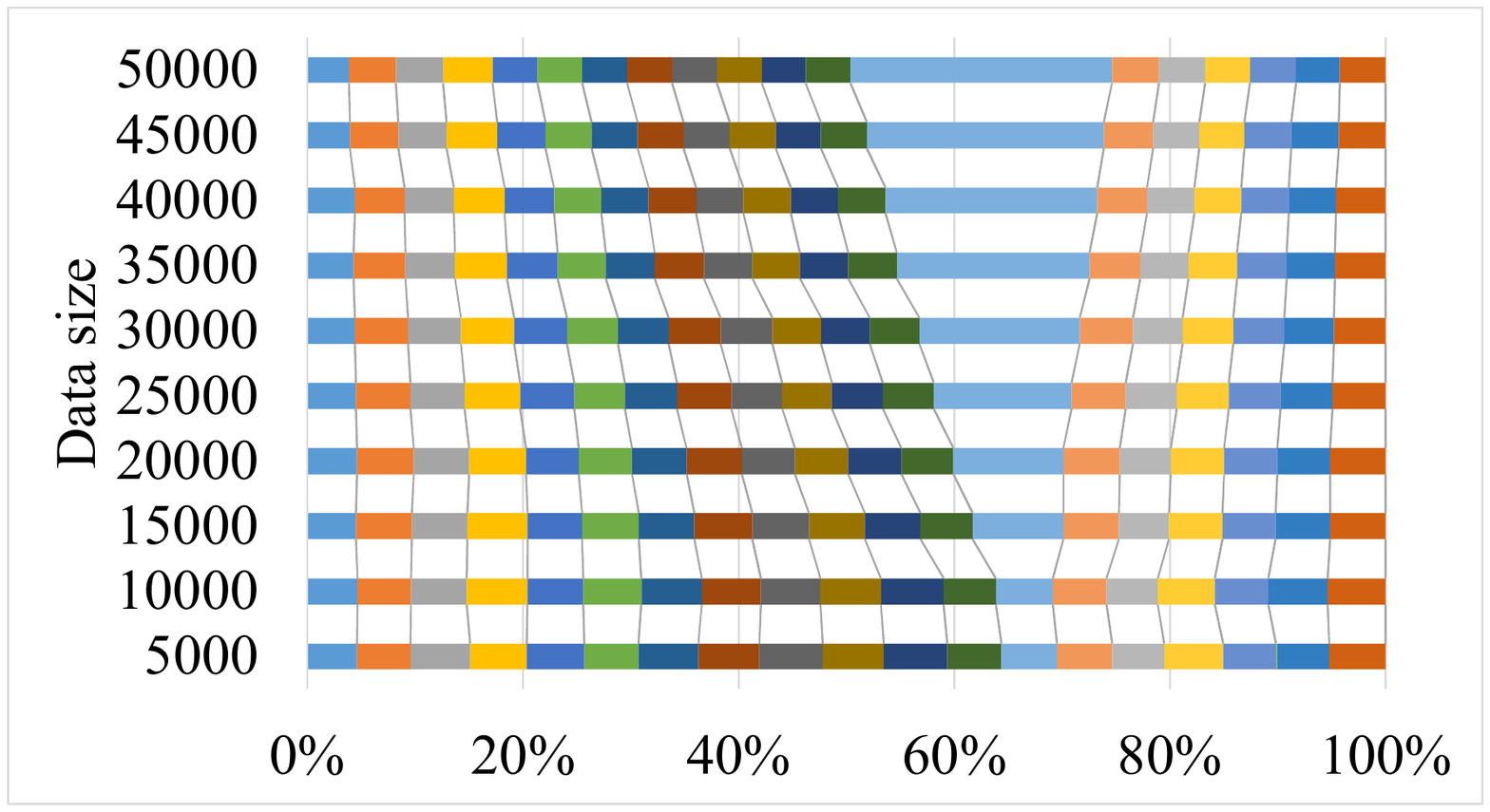}
		\caption{Allocation due to \svoila with Single Element processing.}
		\label{fig:2alloc_neyman_online}
	\end{subfigure}\hfill%
	\begin{subfigure}{0.33\textwidth}
		\includegraphics[width=\textwidth]{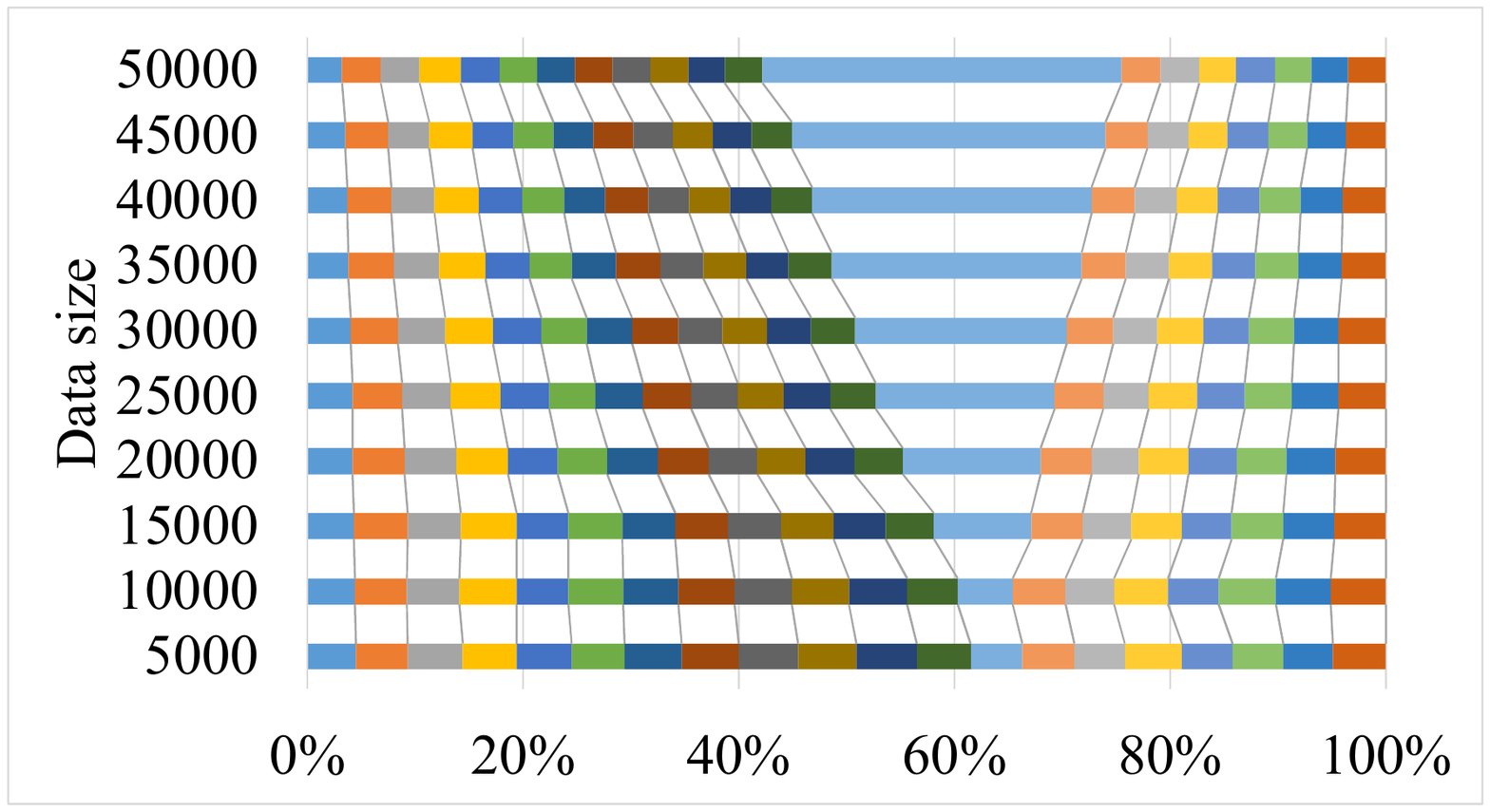}
		\caption{Allocation due to \svoila with Minibatch
                  processing (batch size = 100).}
		\label{fig:2alloc_neyman_batch}
	\end{subfigure}
	\label{fig:2allocation}
	\caption{The Change in allocations of different algorithms over time with synthetic dataset.}	
\end{figure*}

\begin{figure}[th]
	\centering
	\includegraphics[width=\graphwidth]{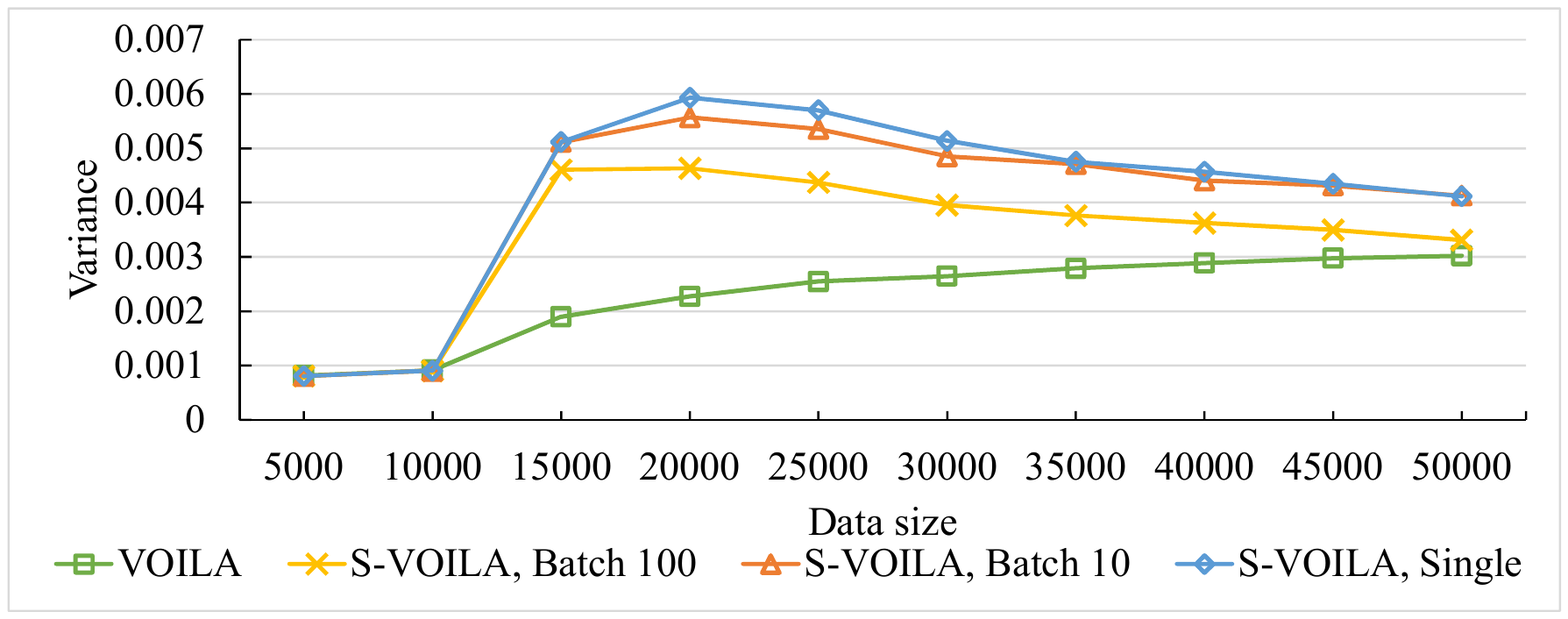}
	\caption{The variance changes due to a sole change in synthetic data.}
	\label{fig:2alloc_variance}
\end{figure}

In a real-world dataset such as OpenAQ, the allocation is affected by
the combination of multiple factors that continuously change. To
better observe the behavior of our algorithms under a single change,
we conducted an experiment with our synthetic data. Figure~\ref{fig:data2_stddev} shows a single change
in stratum 12, where the standard deviation suddenly increases from 1
to 20 after the first $10,000$ records are generated. Meanwhile, the standard deviation of all the other strata are stable and their
frequencies are stable. After this change, we will expect
Stratum 12 to be given a greater sample size than the other strata.
The memory budget is set to $1,000$ records, which is $2\%$ of the data
size at the end of the experiment.

Figures \ref{fig:2alloc_neyman_offline}, \ref{fig:2alloc_neyman_online}, and \ref{fig:2alloc_neyman_batch} show
the allocations produced by \vob, single element \svoila, and
minibatch \svoila, respectively. As seen, \svoila slowly captures the
sudden change in the data by giving Stratum~12 more sample space over time.
\voila is more sensitive to the change, due to the fact that \vob
works in an offline manner and is able to sample more data into
Stratum 12 right after the change. Visually, minibatch \svoila is
closer to the \vob than single element \svoila.

Figure~\ref{fig:2alloc_variance} shows the variance of different methods on synthetic data. At first, when the data is stable, all methods have nearly optimal variance. After a single change at 10,000 records, the variance of \voila increases, while those of different versions of \svoila increase at a faster rate. \svoila with a higher  minibatch size has a lower variance. Interestingly, the variance of all versions of \svoila converge to that of the optimal method, \voila, though \svoila with a minibatch of 100 elements converges the fastest.

\begin{figure}[t]
	\centering
	\includegraphics[width=\graphwidth]{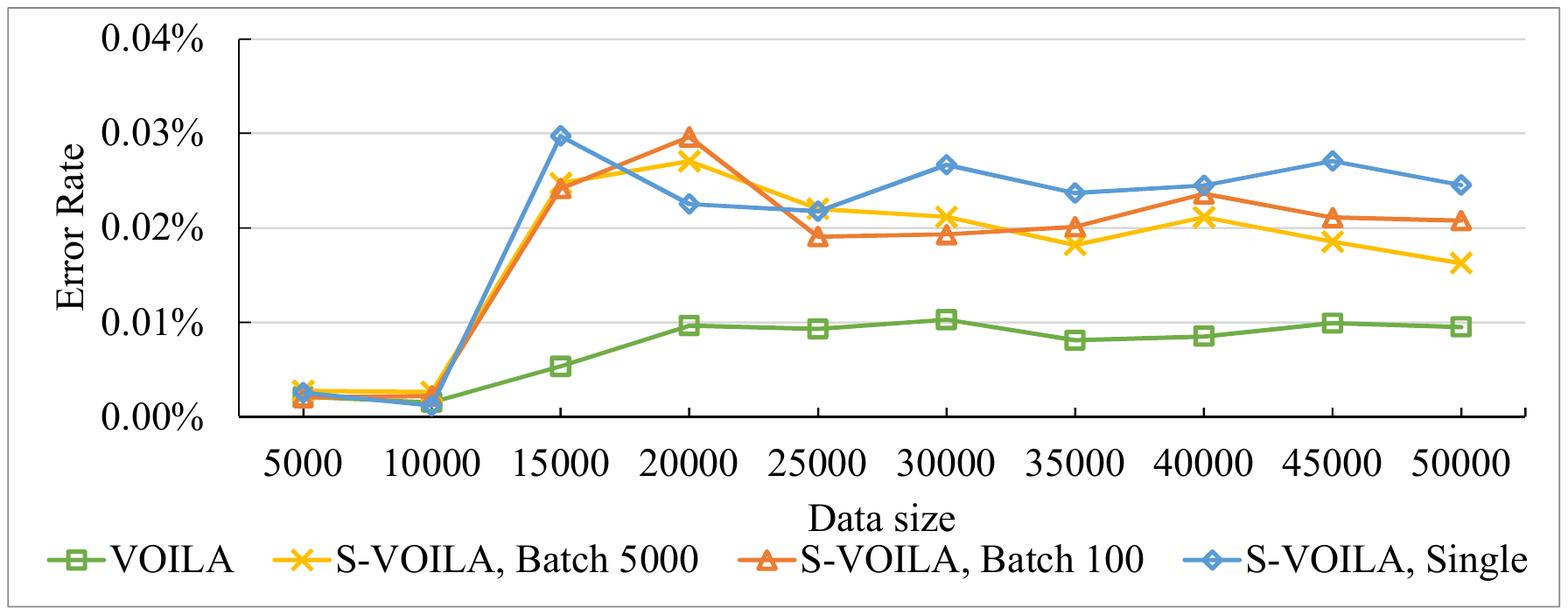}
	\caption{Query Performance on synthetic data as size of streaming data increases, with sample size fixed at 1,000 and one stratum's distribution changed at $10,000$.}
	\label{fig:syn_all}
\end{figure}

We also test the query performance of \vob and \svoila with different minibatch sizes on synthetic data.
Figure~\ref{fig:syn_all} shows the performance of a query across all
strata. The first observation is \vob is less affected by the
distribution change since it samples from all the received data, while
\svoila methods had to discard data on the fly. Another observation is
that performance of \svoila with a larger minibatch size will be closer to \vob as stream continues.


\section{Conclusions}
\label{sec:conclusion}
We presented \voila, a variance-optimal method for offline SRS from data that may have bounded strata. \voila is a generalization of Neyman allocation, which assumes that each stratum has abundant data available. Our experiments show that on real and synthetic data, a stratified random sample obtained using \voila can have a significantly smaller variance than one obtained by Neyman allocation. We also presented \svoila, an algorithm for streaming SRS with minibatch processing, whose sample allocation is continuously adjusted in a locally variance-optimal manner. Our experiments show that \svoila results in variance that is typically close to \voila, which was given the entire input beforehand. The quality of the sample maintained by \svoila improves as the size of the minibatch increases. We show an inherent lower bound on the worst-case variance of any streaming  algorithm for SRS -- this limitation is not due to the inability to compute the optimal sample allocation in a streaming manner, but is instead due to the inability to increase sample sizes in a streaming manner, while maintaining uniformly weighted sampling within a stratum. There are several directions for future research, including (1)~restratification in a streaming manner (2)~incorporating time-decay into sampling, where more recent elements are given a higher probability of being included in the sample, and (3)~stratified random sampling on distributed data.

\clearpage

\bibliographystyle{abbrv}
\bibliography{streams}

\begin{thebibliography}{10}

\bibitem{AGP-SIGMOD2000}
S.~Acharya, P.~Gibbons, and V.~Poosala.
\newblock Congressional samples for approximate answering of group-by queries.
\newblock In {\em Proc. SIGMOD}, pages 487--498, 2000.

\bibitem{AGPR-SIGMOD1999}
S.~Acharya, P.~B. Gibbons, V.~Poosala, and S.~Ramaswamy.
\newblock The aqua approximate query answering system.
\newblock In {\em Proc. SIGMOD}, pages 574--576, 1999.

\bibitem{AMPMMS-eurosys2013}
S.~Agarwal, B.~Mozafari, A.~Panda, H.~Milner, S.~Madden, and I.~Stoica.
\newblock {BlinkDB}: Queries with bounded errors and bounded response times on
  very large data.
\newblock In {\em {Proc. EuroSys}}, pages 29--42, 2013.

\bibitem{AL-srs-SSDBM2010}
M.~Al{-}Kateb and B.~S. Lee.
\newblock Stratified reservoir sampling over heterogeneous data streams.
\newblock In {\em Proc. SSDBM}, pages 621--639, 2010.

\bibitem{AL-IS2014}
M.~Al{-}Kateb and B.~S. Lee.
\newblock Adaptive stratified reservoir sampling over heterogeneous data
  streams.
\newblock {\em Information Systems}, 39:199--216, 2014.

\bibitem{ALW-SSDBM2007}
M.~Al-Kateb, B.~S. Lee, and X.~S. Wang.
\newblock Adaptive-size reservoir sampling over data streams.
\newblock In {\em Proc. SSDBM}, page~22, 2007.

\bibitem{BBDMW02}
B.~Babcock, S.~Babu, M.~Datar, R.~Motwani, and J.~Widom.
\newblock Models and issues in data stream systems.
\newblock In {\em Proc. PODS}, pages 1--16, 2002.

\bibitem{BCD-SIGMOD2003}
B.~Babcock, S.~Chaudhuri, and G.~Das.
\newblock Dynamic sample selection for approximate query processing.
\newblock In {\em Proc. SIGMOD}, pages 539--550, 2003.

\bibitem{BDM02}
B.~Babcock, M.~Datar, and R.~Motwani.
\newblock Sampling from a moving window over streaming data.
\newblock In {\em Proc. SODA}, pages 633--634, 2002.

\bibitem{BOV-sampling2015}
V.~Braverman, R.~Ostrovsky, and G.~Vorsanger.
\newblock Weighted sampling without replacement from data streams.
\newblock {\em Inf. Process. Lett.}, 115(12):923--926, 2015.

\bibitem{BOZ-PODS2009}
V.~Braverman, R.~Ostrovsky, and C.~Zaniolo.
\newblock Optimal sampling from sliding windows.
\newblock In {\em Proc. PODS}, pages 147--156, 2009.

\bibitem{CDN-TODS2007}
S.~Chaudhuri, G.~Das, and V.~Narasayya.
\newblock Optimized stratified sampling for approximate query processing.
\newblock {\em ACM Trans. Database Syst.}, 32(2), 2007.

\bibitem{Coch77:book}
W.~G. Cochran.
\newblock {\em Sampling Techniques}.
\newblock John Wiley \& Sons, New York, third edition, 1977.

\bibitem{CSSX09}
G.~Cormode, V.~Shkapenyuk, D.~Srivastava, and B.~Xu.
\newblock Forward decay: A practical time decay model for streaming systems.
\newblock In {\em Proc. ICDE}, pages 138--149, 2009.

\bibitem{CTX-SICOMP09}
G.~Cormode, S.~Tirthapura, and B.~Xu.
\newblock Time-decaying sketches for robust aggregation of sensor data.
\newblock {\em SIAM J. Comput.}, 39(4):1309--1339, 2009.

\bibitem{ES-IPL2006}
P.~S. Efraimidis and P.~G. Spirakis.
\newblock Weighted random sampling with a reservoir.
\newblock {\em Inf. Process. Lett.}, 97(5):181--185, 2006.

\bibitem{GL-SIGMOD2008}
R.~Gemulla and W.~Lehner.
\newblock Sampling time-based sliding windows in bounded space.
\newblock In {\em Proc. SIGMOD}, pages 379--392, 2008.

\bibitem{GLH-vldbj2008}
R.~Gemulla, W.~Lehner, and P.~J. Haas.
\newblock Maintaining bounded-size sample synopses of evolving datasets.
\newblock {\em The VLDB Journal}, 17(2):173--201, 2008.

\bibitem{GT01}
P.~B. Gibbons and S.~Tirthapura.
\newblock Estimating simple functions on the union of data streams.
\newblock In {\em Proc. SPAA}, pages 281--291, 2001.

\bibitem{HAAS2016}
P.~J. Haas.
\newblock Data-stream sampling: Basic techniques and results.
\newblock In {\em Data Stream Management}, pages 13--44. Springer, 2016.

\bibitem{JS-CIDR2015}
T.~Johnson and V.~Shkapenyuk.
\newblock Data stream warehousing in tidalrace.
\newblock In {\em Proc. CIDR}, 2015.

\bibitem{JJ-ICDE2008}
S.~Joshi and C.~Jermaine.
\newblock Robust stratified sampling plans for low selectivity queries.
\newblock In {\em Proc. ICDE}, pages 199--208, 2008.

\bibitem{LLS15}
K.~Lang, E.~Liberty, and K.~Shmakov.
\newblock Stratified sampling meets machine learning.
\newblock In {\em Proc. ICML}, pages 2320--2329, 2016.

\bibitem{Lohr-book2009}
S.~L. Lohr.
\newblock {\em Sampling: Design and Analysis}.
\newblock Duxbury Press, 2nd edition, 2009.

\bibitem{MB1983}
I.~Mcleod and D.~Bellhouse.
\newblock A convenient algorithm for drawing a simple random sample.
\newblock {\em Journal of the Royal Statistical Society. Series C. Applied
  Statistics}, 32:182--184, 1983.

\bibitem{Meng-ML2013}
X.~Meng.
\newblock Scalable simple random sampling and stratified sampling.
\newblock In {\em Proc. ICML}, pages 531--539, 2013.

\bibitem{Neyman1934}
J.~Neyman.
\newblock On the two different aspects of the representative method: The method
  of stratified sampling and the method of purposive selection.
\newblock {\em Journal of the Royal Statistical Society}, 97(4):558--625, 1934.

\bibitem{OpenAQ}
http://openaq.org.

\bibitem{SKT2012}
S.~K. Thompson.
\newblock {\em Sampling}.
\newblock Wiley, 3rd edition, 2012.

\bibitem{YTill-book1997}
Y.~Till{\'{e}}.
\newblock {\em Sampling Algorithms}.
\newblock Springer-Verlag, 1st edition, 2006.

\bibitem{Vitter-sampling-focs83}
J.~S. Vitter.
\newblock Optimum algorithms for two random sampling problems.
\newblock In {\em Proc. FOCS}, pages 65--75, 1983.

\bibitem{mySql}
M.~Widenius and D.~Axmark.
\newblock {\em Mysql Reference Manual}.
\newblock O'Reilly \& Associates, Inc., 1st edition, 2002.

\end{thebibliography}




\end{document}